\newcommand{\balpha}{\boldsymbol{\alpha}}
\newcommand{\bfx}{\mathbf{x}}
\newcommand{\bfy}{\mathbf{y}}
\newtheorem{lem} {Lemma}[section]
\newtheorem{prop}{Proposition}[section]
\theoremstyle{definition}
\newtheorem{defn} {Definition}[section]
\theoremstyle{remark}
\numberwithin{equation}{section}
\newcommand{\bse}{\begin{subequations}}
\newcommand{\ese}{\end{subequations}}
\def \b#1{\bar{#1}}
\def \c#1{\mathcal{#1}}
\def \f#1{\mathfrak{#1}}
\def \ss#1{\hbox{\tiny{[#1]}}}
\begin{document}

\title{The semi-discrete AKNS system: Conservation laws, reductions\\ and continuum limits}
\author{Wei Fu$^{1}$,
\quad
Zhijun Qiao$^{2}$,\quad
Junwei Sun$^{1}$,\quad
Da-jun Zhang$^{1}$\footnote{E-mail address: djzhang@staff.shu.edu.cn}\\
{\small \it $^{1}$Department of Mathematics, Shanghai University, Shanghai 200444, P.R.China}\\
{\small \it $^{2}$Department of Mathematics, The University of Texas-Pan American, Edinburg, Texas 78541, U.S.A.}
}
\date{\today}

\maketitle

\begin{abstract}

In this paper, the semi-discrete Ablowitz-Kaup-Newell-Segur (AKNS) hierarchy is shown in spirit composed by the Ablowitz-Ladik flows under certain combinations.
Furthermore, we derive its explicit Lax pairs and infinitely many conservation laws, which are non-trivial in light of continuum limit.
Reductions of the semi-discrete AKNS hierarchy are investigated to include the semi-discrete Korteweg-de Vries (KdV), the semi-discrete modified KdV, and the
semi-discrete nonlinear Schr\"odinger hierarchies as its special cases.
Finally, under the uniform continuum limit we introduce in the paper,
the above  results of the semi-discrete AKNS hierarchy, including Lax pairs, infinitely many conservation laws
and reductions, recover their counterparts of the continuous AKNS hierarchy.

\vskip 6pt
\noindent{\textbf{Keywords:}} the semi-discrete AKNS hierarchy, Lax pairs, conservation laws, reductions, continuum limits\\
\noindent{\textbf{PACS:}} 02.30.Ik
\end{abstract}


\section{Introduction}\label{sec:intro}

Discrete systems are attracting more and more attentions in the study of nonlinear lattice models, but more complicated and harder than continuous systems, mainly due to nonlocal forms of discrete systems with a lack of Leibniz rule and non-uniqueness of discretization.
A discrete model could be related to different continuous models because it depends on different continuum limits. Conversely, a continuous equation could also have several discrete versions upon the discretization procedure.
Generally speaking, a nonlinear system is called ``integrable" , that means the system can be exactly solved,
or it possesses enough solvable characteristics, such as a Lax pair, enough independent conserved quantities
and symmetries.
These integrable characteristics
should be kept for discretization of continuous integrable systems
so that discretized systems are still integrable.
If a system is not fully discrete (e.g. with a discrete spatial variable $n$ and a continuous temporal variable $t$),
we call it a semi-discrete system.

As one of integrable discretization techniques,
the Miwa transformation
\cite{Miwa-PJA-1982} gave rise to a discretization for the famous Sato theory \cite{DJM-JPSJ-1982-I,DJM-JPSJ-1982-II,
DJM-JPSJ-1983-III,DJM-JPSJ-1983-IV,DJM-JPSJ-1983-V}.
Miwa's transformation added discrete independent variables into the continuous dispersion relation,
which essentially breaks the original spatial and temporal independence into the discretization.
As a consequence, when taking continuum limit to recover a continuous integrable system
from a discrete integrable system obtained through the discrete Sato theory,
one needs to allocate independent variables so that the new independent variables  coincide with the desired continuous dispersion relation.

One can also discretize spectral problems in the Lax pairs by replacing
the derivatives of eigenfunctions with their differences.
For example, the Ablowitz-Ladik (AL) spectral problem \cite{AL75,AL76}
\begin{equation}
\left(
  \begin{array}{c}
     \phi_{1,n+1} \\
     \phi_{2,n+1} \\
  \end{array}
  \right)
  =
\left(
  \begin{array}{cc}
    \lambda & Q_n \\
    R_n & \frac{1}{\lambda}
  \end{array}
\right)
\left(
  \begin{array}{c}
    \phi_{1,n} \\
    \phi_{2,n}
  \end{array}
\right)
\end{equation}
is a discretized version of the well known Ablowitz-Kaup-Newell-Segur (AKNS) spectral problem \cite{AKNS73,AKNS74}
\begin{equation}
\left(
  \begin{array}{c}
     \phi_1 \\
     \phi_2 \\
  \end{array}
  \right)_x
  =
\left(
  \begin{array}{cc}
    \eta & q \\
    r & -\eta
  \end{array}
\right)
\left(
  \begin{array}{c}
    \phi_1 \\
    \phi_2
  \end{array}
\right).
\label{akns-sp}
\end{equation}
In this discretization approach, the spatial and temporal independence is kept, but we still lose the correspondence
between discrete equations and their continuous counterparts.
Obviously, the forward difference $\phi_{j,n+1}-\phi_{j,n}$ is not the unique discretization representative of $\phi_{j,x}$.
Therefore, we usually need to combine the equations in the AL hierarchy so as to obtain the semi-discrete AKNS (sdAKNS) hierarchy.
In a recent paper \cite{ZC10b}, the combinatorial relation between the AL hierarchy and the AKNS hierarchy
is  proved through the continuum limit and the algebra on infinitely many symmetries.

In this paper we will revisit the sdAKNS hierarchy.
It is well known that the AKNS spectral problem \eqref{akns-sp} provides integrable backgrounds for
several nonlinear systems of physical significance \cite{AKNS73}.
The Korteweg-de Vries (KdV) equation, the modified  Korteweg-de Vries (mKdV) equation,
the sine-Gordon   equation, and the nonlinear Schr\"odinger (NLS) equation can be derived as reductions of the AKNS hierarchy.
The reductions can also pass the integrable characteristics of the AKNS hierarchy to the reduced systems,
such as the infinitely many conservation laws \cite{wsk-75}, infinitely many symmetries \cite{LZ86,CZ-JPA-1991}, and so on.
For the AL hierarchy, its integrable characteristics, such as conservation laws, symmetries and Hamiltonian structures,
have well been studied \cite{ZW-JPA-1995,TM-JPSJ-2000,ZC02a,ZC02b,ZNBC06,GHMT08,ZC10a,ZC10b}.

 The main purpose in this paper is to give Lax pairs, infinitely many conservation laws and reductions of the sdAKNS hierarchy,
and investigate their continuum limits, particularly, the conservation laws.
The so-called infinitely many conserved densities derived from the AL spectral problem (cf. \cite{ZC02a})
look like nontrivial but in fact they are trivial in light of continuum limit.
Later on, we will see  that all of these conserved densities
go to the first conserved density of the AKNS hierarchy in the same continuum limit.
It is necessary for one to derive new and nontrivial infinitely many conservation laws for the sdAKNS hierarchy.

The whole paper is organized as follows.
In Sec. 2 we recall some results of the AKNS systems, including hierarchies, Lax pairs, conservation laws, and reductions.
In Sec. 3, first, we re-derive the sdAKNS hierarchy and their Lax pairs
so that they are ready to consider continuum limits.
The Lax pairs are also used to derive conservation laws.
We derive new infinitely many conservation laws, which are not trivial under our continuum limit.
The explicit combinatorial relations between the known conservation laws and the new conservation laws are proved.
As reductions of the sdAKNS hierarchy, we obtain the semi-discrete KdV (sdKdV),
semi-discrete mKdV (sdmKdV), and semi-discrete NLS (sdNLS) hierarchies
together with their recursion operators.
Finally, in Sec. 4 we present a uniform continuum limit, and under the limit the above results of the sdAKNS system recover their counterparts of the continuous AKNS system.

\section{The AKNS system}\label{sec:A}

Let us briefly review the well known continuous AKNS system.
We just display some results of our interest in the paper.
They will be targeted at the continuum limits of the sdAKNS system.

\subsection{The AKNS hierarchy and Lax pairs}\label{sec:A-hie}

The AKNS spectral problem coupled with a time evolution part reads \cite{AKNS73}
\bse\label{A-Lax}
\begin{align}
&\Phi_x=M_{\ss A}\Phi,\quad
M_{\ss A}=
\left(
  \begin{array}{cc}
    \eta & q \\
    r & -\eta
  \end{array}
\right),\quad
u=
\left(
  \begin{array}{c}
    q \\
    r
  \end{array}
\right),\quad
\Phi=
\left(
  \begin{array}{c}
    \phi_1  \\
    \phi_2
  \end{array}
\right),
\label{A-Lax-a}\\
&\Phi_{t_s}=N_{\ss A,s}\Phi,\quad
N_{\ss A,s}=
\left(
  \begin{array}{cc}
    A_{\ss A,s} & B_{\ss A,s} \\
    C_{\ss A,s} & -A_{\ss A,s}
  \end{array}
\right),
\quad s=0,1,2\cdots,\label{A-Lax-b}
\end{align}
\ese
where $\eta$ is a spectral parameter independent of time,
$q=q(x,t)$ and $r=r(x,t)$ are potential functions
and $A_{\ss A,s},B_{\ss A,s}$ and $C_{\ss A,s}$  are polynomials of $\eta$ depending on $q,r$ and their derivatives.
The sub-$s$ is used in the paper to correspond the $s$-th equation in the AKNS hierarchy.
We also note that in the paper the subscripts [A], [K], [M], [N] and [V] are added to denote the notations for the
AKNS, KdV, mKdV, NLS and Volterra systems, respectively.
To obtain the AKNS hierarchy, from the zero curvature equation
\begin{align}\label{A-zcr}
\partial_{t_s}  M_{\ss A}-\partial_x N_{\ss A,s}+[M_{\ss A},N_{\ss A,s}]=0,\quad s=1,2,\cdots,
\end{align}
where $[M_{\ss A},N_{\ss A,s}]=M_{\ss A}N_{\ss A,s}-N_{\ss A,s} M_{\ss A}$,
one can expand
\[
\left(
  \begin{array}{c}
    B_{\ss A,s} \\
    C_{\ss A,s} \\
  \end{array}
\right)
=\sum_{k=1}^s
\left(
  \begin{array}{c}
    b_k \\
    c_k \\
  \end{array}
\right)
(2\eta)^{s-k}
\]
and then rewrite \eqref{A-zcr} as the following,
\bse
\begin{align}
\label{A-hie-1}
& u_{t_s}=K_{\ss A,s}=L_{\ss A}^s
\left(
  \begin{array}{c}
    q \\
    -r \\
  \end{array}
\right),\\
\label{A-BC}
&\left(
  \begin{array}{c}
    B_{\ss A,s} \\
    C_{\ss A,s} \\
  \end{array}
\right)
=-\sigma\sum_{k=1}^sL_{\ss A}^{k-1}
\left(
  \begin{array}{c}
    q \\
    -r \\
  \end{array}
\right)
(2\eta)^{s-k},\\
&A_{\ss A,s}=-\partial_x^{-1}(r,-q)
\left(
  \begin{array}{c}
    B_{\ss A,s} \\
    C_{\ss A,s} \\
  \end{array}
\right)
+\frac{1}{2}(2\eta)^{s},\label{A-A}
\end{align}
\ese
where $L_{\ss A}$ is the recursion operator defined as
\begin{align}\label{A-ro}
L_{\ss A}=-\sigma\partial_x+2\sigma
\left(
  \begin{array}{c}
    q \\
    r \\
  \end{array}
\right)
\partial_x^{-1}
(r,q),\quad
\sigma=
\left(
  \begin{array}{cc}
    -1 & 0 \\
    0 & 1 \\
  \end{array}
\right).
\end{align}
\eqref{A-hie-1} is referred to as the AKNS hierarchy.
Actually, the hierarchy can start from $s=0$ by defining $K_{\ss A,0}=(q,-r)^T$.
Thus, the AKNS hierarchy is expressed as
\begin{align}\label{A-hie}
u_{t_s}=K_{\ss A,s}=L_{\ss A}^s
\left(
  \begin{array}{c}
    q \\
    -r \\
  \end{array}
\right),\quad
s=0,1,\cdots.
\end{align}
The first four flows are
\bse\label{A-Ks}
\begin{align}
&K_{\ss A,0}=
\left(
  \begin{array}{c}
    q \\
    -r \\
  \end{array}
\right),\label{A-K0}\\
&K_{\ss A,1}=
\left(
  \begin{array}{c}
    q_x \\
    r_x \\
  \end{array}
\right),\label{A-K1}\\
&K_{\ss A,2}=
\left(
  \begin{array}{c}
    q_{xx}-2q^2r \\
    -r_{xx}+2qr^2 \\
  \end{array}
\right),\label{A-K2}\\
&K_{\ss A,3}=
\left(
  \begin{array}{c}
    q_{xxx}-6qrq_x \\
    r_{xxx}-6qrr_x \\
  \end{array}
\right).\label{A-K3}
\end{align}
\ese
\eqref{A-Lax} is called the Lax pair of the hierarchy \eqref{A-hie}, where
the first four $N_{\ss A,s}$ read 
\bse\label{A-Ns}
\begin{align}
&N_{\ss A,0}=
\left(
  \begin{array}{cc}
    \frac{1}{2} & 0 \\
    0 & -\frac{1}{2}
  \end{array}
\right),\label{A-N0}\\
&N_{\ss A,1}=
\left(
  \begin{array}{cc}
    \eta & q \\
    r & -\eta
  \end{array}
\right),\label{A-N1}\\
&N_{\ss A,2}=
\left(
  \begin{array}{cc}
    2\eta^2-qr & 2\eta q+q_x \\
    2\eta r-r_x & -2\eta^2+qr
  \end{array}
\right),\label{A-N2}\\
&N_{\ss A,3}=
\left(
  \begin{array}{cc}
    4\eta^3-2\eta qr+qr_x-q_xr & 4\eta^2q+2\eta q_x+q_{xx}-2q^2r \\
    4\eta^2r-2\eta r_x+r_{xx}-2qr^2 & -4\eta^3+2\eta qr-qr_x+q_xr
  \end{array}
\right).\label{A-N3}
\end{align}
\ese

\subsection{Conservation laws}\label{sec:A-cls}

The infinitely many conservation laws of the AKNS hierarchy can be constructed from their Lax pairs (cf. \cite{wsk-75}).
Starting from the AKNS spectral problem \eqref{A-Lax-a}, i.e.
\bse\label{A-sp}
\begin{align}
&\phi_{1,x}=\eta\phi_1+q\phi_2,\label{A-sp-a}\\
&\phi_{2,x}=r\phi_1-\eta\phi_2,\label{A-sp-b}
\end{align}
\ese
and setting $\omega_{\ss A}=\frac{\phi_2}{\phi_1}$,  we are able to obtain the Riccati equation
\begin{align}\label{A-Ric}
2\eta\omega_{\ss A}=-\omega_{\ss A,x}-q\omega_{\ss A}^2+r.
\end{align}
This equation is solved by a series-form
\begin{align}\label{A-ome}
\omega_{\ss A}=\sum_{j=1}^{\infty}\omega_{\ss A}^{(j)}(2\eta)^{-j}
\end{align}
with
\bse\label{A-ome-rec}
\begin{align}
&\omega_{\ss A}^{(1)}=r,\quad \omega_{\ss A}^{(2)}=-r_x,\label{A-ome-rec-a}\\
&\omega_{\ss A}^{(j+1)}=-\omega_{\ss A,x}^{(j)}-q\sum_{k=1}^{j-1}\omega_{\ss A}^{(k)}\omega_{\ss A}^{(j-k)},\quad j=2,3,\cdots.\label{A-ome-rec-b}
\end{align}
\ese
Here we list the first few of $\omega_{\ss A}^{(j)}$:
\bse\label{A-omej}
\begin{align}
&\omega_{\ss A}^{(1)}=r,\label{A-ome0}\\
&\omega_{\ss A}^{(2)}=-r_x,\label{A-ome1}\\
&\omega_{\ss A}^{(3)}=r_{xx}-qr^2,\label{A-ome2}\\
&\omega_{\ss A}^{(4)}=-r_{xxx}+q_xr^2+4qrr_x.\label{A-ome3}
\end{align}
\ese
Next, from the Lax pair \eqref{A-Lax} one can find
\bse\label{A-lnphi}
\begin{align}
&(\ln\phi_2)_x=\eta+q\omega_{\ss A},\label{A-lnphi-a}\\
&(\ln\phi_2)_{t_s}=A_{\ss A,s}+B_{\ss A,s}\omega_{\ss A},\label{A-lnphi-b}
\end{align}
\ese
where $A_{\ss A,s}$ and $B_{\ss A,s}$ are expressed by \eqref{A-A} and \eqref{A-BC}, respectively.
The compatibility condition $(\ln \phi_2)_{x,t_s}=(\ln \phi_2)_{t_s,x}$ leads to a formal conservation law
\begin{align}\label{A-cf}
(q\omega_{\ss A})_{t_s}=(A_{\ss A,s}+B_{\ss A,s}\omega_{\ss A})_x.
\end{align}
To derive  infinitely many conservation laws, one needs to insert \eqref{A-ome} into \eqref{A-cf}
and expand \eqref{A-cls} into a series in terms of $2\eta$.
We note that after expansion one finds
\begin{equation}
A_{\ss A,s}+B_{\ss A,s}\omega_{\ss A}=\frac{1}{2}(2\eta)^{s}+\sum^{\infty}_{j=1}J_{\ss A,s}^{(j)}(2\eta)^{-j},
\label{A-expn-flux}
\end{equation}
where the term $\frac{1}{2}(2\eta)^{s}$ comes from $A_{\ss A,s}|_{u=0}$ and contributes nothing to the conservation laws.
The coefficients of every different  power of $2\eta$ in \eqref{A-cf}  compose the infinitely many conservation laws
\begin{align}\label{A-cls}
\partial_{t_s}\rho_{\ss A}^{(j)}=\partial_xJ_{\ss A,s}^{(j)},\quad j=1,2,\cdots.
\end{align}
We note that the infinitely many conserved densities $\{\rho_{\ss A}^{(j)}\}$ are shared by all the equations in the AKNS hierarchy.
The first three of the conserved densities are
\bse\label{A-rhoj}
\begin{align}
&\rho_{\ss A}^{(1)}=qr,\label{A-rho1}\\
&\rho_{\ss A}^{(2)}=-qr_x,\label{A-rho2}\\
&\rho_{\ss A}^{(3)}=qr_{xx}-q^2r^2.\label{A-rho3}
\end{align}
\ese
The associated fluxes depend on which equation is considered in the hierarchy.
For example, if we consider $u_{t_2}=K_{\ss A,2}$,
we have $A_{\ss A,2}= 2\eta^2-qr $ and $B_{\ss A,2}= 2\eta q+q_x$
and from \eqref{A-expn-flux} the first three fluxes are
\bse\label{A-Jj}
\begin{align}
&J_{\ss A,2}^{(1)}=q_xr-qr_x,\label{A-J1}\\
&J_{\ss A,2}^{(2)}=qr_{xx}-q_xr_x-q^2r^2,\label{A-J2}\\
&J_{\ss A,2}^{(3)}=q_xr_{xx}-qr_{xxx}+4q^2rr_x.\label{A-J3}
\end{align}
\ese

\subsection{Reductions}\label{sec:A-reduc}

It is known that the KdV, mKdV, NLS, and sine-Gordon families are able to be derived from the AKNS hierarchy \eqref{A-hie}
as its reductions.
Here, let us only list the cases of the KdV, mKdV, and NLS systems.

\subsubsection{The KdV family}\label{sec:KdV}

The KdV family is reduced from the odd numbered equations in the AKNS hierarchy by taking $(q,r)=(q,-1)$, i.e.
\begin{align}\label{K-reduc}
\left(
  \begin{array}{c}
    q \\
    -1 \\
  \end{array}
\right)_{t_{2s+1}}
=K_{\ss A,2s+1}|_{r=-1}=(L_{\ss A}|_{r=-1})^{2s}
\left(
  \begin{array}{c}
    q_x \\
    0 \\
  \end{array}
\right)=
\left(
  \begin{array}{c}
    L_{\ss K}^sq_x \\
    0 \\
  \end{array}
\right),\quad
s=0,1,\cdots,
\end{align}
where
\begin{align}\label{K-ro}
L_{\ss K}=\partial_x^2+4q+2q_x\partial_x^{-1}
\end{align}
is the recursion operator of the KdV family.
\eqref{K-reduc} is simplified to the following,
\begin{align}\label{K-hie}
q_{t_{2s+1}}=K_{\ss K,2s+1}=L_{\ss K}^sq_x,\quad s=0,1,\cdots,
\end{align}
which are referred to as the KdV family.
When $s=1$, we obtain the KdV equation
\begin{align}\label{K-eq}
q_{t_3}=K_{\ss K,3}=q_{xxx}+6qq_x.
\end{align}
If we eliminate $\phi_1$ from \eqref{A-Lax} with $s=1$ and the constraint $(q,r)=(q,-1)$,
we obtain the following scalar form Lax pair for the KdV equation:
\bse\label{K-Lax}
\begin{align}
&\phi_{xx}=(\eta^2-q)\phi,\label{K-Lax-a}\\
&\phi_{t_3}=-q_x\phi+(4\eta^2+2q)\phi_x,\label{K-Lax-b}
\end{align}
\ese
where we have taken $\phi_2=\phi$.

\subsubsection{The mKdV family}\label{sec:MKdV}

The mKdV family is derived from the odd numbered equations  of the AKNS hierarchy by taking $(q,r)=(q,\mp q)$, i.e.
\begin{align}\label{M-reduc}
\left(
  \begin{array}{c}
    q \\
    \mp q \\
  \end{array}
\right)_{t_{2s+1}}
=K_{\ss A,2s+1}|_{r=\mp q}=(L_{\ss A}|_{r=\mp q})^{2s}
\left(
  \begin{array}{c}
    q_x \\
    \mp q_x \\
  \end{array}
\right)=
\left(
  \begin{array}{c}
    (L_{\ss M}^\pm)^sq_x \\
    \mp(L_{\ss M}^\pm)^sq_x \\
  \end{array}
\right),\quad
s=0,1,\cdots,
\end{align}
where
\begin{align}\label{M-ro}
L_{\ss M}^\pm=\partial_x^2\pm4q^2\pm4q_x\partial_x^{-1}q,
\end{align}
which leads to the mKdV family
\begin{align}\label{M-hie}
q_{t_{2s+1}}=K_{\ss M,2s+1}^\pm=(L_{\ss M}^\pm)^sq_x,\quad s=0,1,\cdots.
\end{align}
When $s=1$, we obtain the mKdV equation
\begin{align}\label{M-eq}
q_{t_3}=K_{\ss M,3}^\pm=q_{xxx}\pm 6q^2q_x.
\end{align}
The Lax pairs for the mKdV hierarchy are obtained from the Lax pairs for the odd numbered equations of the AKNS hierarchy with the constraint
$(q,r)=(q,\mp q)$.

\subsubsection{The NLS family}\label{sec:NLS}

Replacing $t_s$ with $i^{s-1}t_s$ and taking $r=\mp q^*$ in the AKNS hierarchy \eqref{A-hie}, we have
\begin{align}\label{N-reduc}
\left(
  \begin{array}{c}
    q \\
    \mp q^* \\
  \end{array}
\right)_{t_s}
=K_{\ss A,s}|_{r=\mp q^*}=(-i)^{s-1}(L_{\ss A}|_{r=\mp q^*})^s
\left(
  \begin{array}{c}
    q \\
    \pm q^* \\
  \end{array}
\right),
\end{align}
i.e.
\begin{align}\label{N-hie}
\left(
  \begin{array}{c}
    q \\
    \mp q^* \\
  \end{array}
\right)_{t_s}
=K_{\ss N,s}^\pm=(-i)^{s-1}(L_{\ss N}^\pm)^s
\left(
  \begin{array}{c}
    q \\
    \pm q^* \\
  \end{array}
\right),\quad s=0,1,\cdots,
\end{align}
where
\begin{align}\label{N-ro}
L_{\ss N}^\pm=-\sigma\partial_x+2\sigma
\left(
  \begin{array}{c}
    q \\
    \mp q^* \\
  \end{array}
\right)
\partial_x^{-1}
(\mp q^*,q),
\end{align}
and $^*$ stands for the complex conjugate.
\eqref{N-hie} gives  the NLS family, in which the third equation $(s=2)$ is the NLS equation
\begin{align}\label{N-eq}
q_{t_2}=K_{\ss N,2}^\pm=-i(q_{xx}\pm 2|q|^2q).
\end{align}

We note that sometimes the NLS family also means \eqref{N-hie} only with $s=0,2,4,\cdots$.

\section{The sdAKNS system}\label{sec:dA}

\subsection{The sdAKNS hierarchy and Lax pairs}\label{sec:dA-hie}

\subsubsection{The AL hierarchy}\label{sec:3-1-1}

The sdAKNS hierarchy can be derived from the AL hierarchy \cite{ZC10b}.
Let us  start from the AL spectral problem and its  time evolution part,
\bse\label{AL-Lax}
\begin{align}
&E\b\Phi=\b M\Phi,\quad
\b M=
\left(
  \begin{array}{cc}
    \lambda & Q_n \\
    R_n & \frac{1}{\lambda} \\
  \end{array}
\right),\quad
\b U=
\left(
  \begin{array}{c}
    Q_n \\
    R_n \\
  \end{array}
\right),\quad
\b\Phi=
\left(
  \begin{array}{c}
    \b\phi_1(n) \\
    \b\phi_2(n) \\
  \end{array}
\right),\label{AL-Lax-a}\\
&\b\Phi_{\b t_s}=\b N_s\b\Phi,\quad
\b N_s=
\left(
  \begin{array}{cc}
    \b A_s & \b B_s \\
    \b C_s & \b D_s \\
  \end{array}
\right),\quad s\in \mathbb{Z},\label{AL-Lax-b}
\end{align}
\ese
where $E$ is a shift operator defined as $Ef(n)=f(n+1)$, $\lambda$ is the spectral parameter independent of time,
$Q_n=Q(n,t)$ and $R_n=R(n,t)$ are potential functions,
and $\b A_s,\b B_s,\b C_s$ and $\b D_s$ are Laurent polynomials of $\lambda$ depending on $Q_n$, $R_n$ and their shift operators.
From the discrete zero curvature equation
\begin{align}\label{AL-zcr}
\b M_{\b t_s}=(E\b N_s)\b M_s-\b M\b N_s,
\end{align}
we have
\bse
\begin{align}
&\b A_s=-\frac{1}{\lambda}(E-1)^{-1}(R_nE,-Q_n)
\left(
  \begin{array}{c}
    \b B_s \\
    \b C_s \\
  \end{array}
\right)
+\b A_s^{(0)},\label{AL-zcr-1}\\
&\b D_s=\frac{1}{\lambda}(E-1)^{-1}(R_n,-Q_nE)
\left(
  \begin{array}{c}
    \b B_s \\
    \b C_s \\
  \end{array}
\right)
+\b D_s^{(0)},\label{AL-zcr-4}
\end{align}
\ese
and
\begin{align}\label{AL-zcr-23}
\left(
  \begin{array}{c}
    Q_n \\
    R_n \\
  \end{array}
\right)_{\b t_s}
=(\lambda\b L_1-\frac{1}{\lambda}\b L_2)
\left(
  \begin{array}{c}
    \b B_s \\
    \b C_s \\
  \end{array}
\right)
+(\b A_s^{(0)}-\b D_s^{(0)})
\left(
  \begin{array}{c}
    Q_n \\
    -R_n \\
  \end{array}
\right),
\end{align}
where $\b A_s^{(0)}=\b A_s|_{\b U=0}$, $\b D_s^{(0)}=\b D_s|_{\b U=0}$ and
\bse\label{AL-Lj}
\begin{align}
&\b L_1=
\left(
  \begin{array}{cc}
    -1 & 0 \\
    0 & E \\
  \end{array}
\right)+
\left(
  \begin{array}{c}
    -Q_n \\
    R_nE \\
  \end{array}
\right)
(E-1)^{-1}(R_n,-Q_nE),\label{AL-L1}\\
&\b L_2=
\left(
  \begin{array}{cc}
    -E & 0 \\
    0 & 1 \\
  \end{array}
\right)-
\left(
  \begin{array}{c}
    -Q_nE \\
    R_n \\
  \end{array}
\right)
(E-1)^{-1}(R_nE,-Q_n).\label{AL-L2}
\end{align}
\ese
The inverse of $\b L_1$ and $\b L_2$ can explicitly be written  as
\bse\label{AL-Lj(-1)}
\begin{align}
&\b L_1^{-1}=
\left(
  \begin{array}{cc}
    -1 & 0 \\
    0 & E^{-1} \\
  \end{array}
\right)+
\left(
  \begin{array}{c}
    Q_n \\
    R_{n-1} \\
  \end{array}
\right)
(E-1)^{-1}(R_n,Q_n)\frac{1}{\mu_n},\label{AL-L1(-1)}\\
&\b L_2^{-1}=
\left(
  \begin{array}{cc}
    -E^{-1} & 0 \\
    0 & 1 \\
  \end{array}
\right)-
\left(
  \begin{array}{c}
    Q_{n-1} \\
    R_n \\
  \end{array}
\right)
(E-1)^{-1}(R_n,Q_n)\frac{1}{\mu_n},\label{AL-L2(-1)}
\end{align}
\ese
where
\begin{equation}
\mu_n=1-Q_nR_n.
\label{mu}
\end{equation}

To derive the AL hierarchy, we expand $(\b B_s,\b C_s)^\mathrm{T}$ as
\begin{align}\label{AL-expan-pos}
\left(
  \begin{array}{c}
    \b B_s \\
    \b C_s \\
  \end{array}
\right)
=\sum_{j=1}^s
\left(
  \begin{array}{c}
    \b b_s^{(j)} \\
    \b c_s^{(j)} \\
  \end{array}
\right)
\lambda^{2(s-j)+1},\quad s=1,2,\cdots.
\end{align}
Inserting it into \eqref{AL-zcr-23} and taking $\b A_s^{(0)}=-\b D_s^{(0)}=\frac{1}{2}\lambda^{2s}$, we obtain the following recursion structure
\bse
\begin{align}
& \b U_{t_s}=
-\b L_2 \left(
  \begin{array}{c}
    \b b_s^{(s)} \\
    \b c_s^{(s)} \\
  \end{array}
\right),\\
& \left(
  \begin{array}{c}
    \b b_s^{(j+1)} \\
    \b c_s^{(j+1)} \\
  \end{array}
\right)=
\b L_1^{-1}\b L_2
 \left(
  \begin{array}{c}
    \b b_s^{(j)} \\
    \b c_s^{(j)} \\
  \end{array}
\right),\quad j=1,2,\cdots, s-1,\\
&  \left(
  \begin{array}{c}
    \b b_s^{(1)} \\
    \b c_s^{(1)} \\
  \end{array}
\right)=\b L_1^{-1}
 \left(
  \begin{array}{c}
    -Q_{n} \\
    R_{n} \\
  \end{array}
\right)=
\left(
  \begin{array}{c}
    Q_{n} \\
    R_{n-1} \\
  \end{array}
\right),
\end{align}
\ese
i.e.
\bse
\begin{align}
& \b U_{t_s}=\b K_s=\b L^{s-1}
\left(
  \begin{array}{c}
    Q_{n+1} \\
    -R_{n-1} \\
  \end{array}
\right),\quad s=1,2,\cdots,
\label{AL-hie-p}\\
& \left(
  \begin{array}{c}
    \b b_s^{(j)} \\
    \b c_s^{(j)} \\
  \end{array}
\right)=
\b L_1^{-1} \b L ^{j-1}
\left(
  \begin{array}{c}
    Q_{n} \\
   - R_{n} \\
  \end{array}
\right),\quad  j=1,2,\cdots,s,\label{AL-BC-p}
\end{align}
\ese
where \eqref{AL-hie-p} is the positive AL hierarchy and $\b L$ is its recursion operator defined as
\begin{align}\label{AL-L}
\b L=\b L_2\b L_1^{-1}=&
\left(
  \begin{array}{cc}
    E & 0 \\
    0 & E^{-1} \\
  \end{array}
\right)+
\left(
  \begin{array}{c}
    -Q_nE \\
    R_n \\
  \end{array}
\right)
(E-1)^{-1}(R_nE,Q_nE^{-1})\nonumber\\
&+\mu_n
\left(
  \begin{array}{c}
    -EQ_n \\
    R_{n-1} \\
  \end{array}
\right)
(E-1)^{-1}(R_n,Q_n)\frac{1}{\mu_n}.
\end{align}

Since in the AL spectral problem \eqref{A-sp-a} $\lambda$ and $1/\lambda$ appear symmetrically,
one can also expand $(\b B_s,\b C_s)$ as
\begin{align}\label{AL-expan-neg}
\left(
  \begin{array}{c}
    \b B_s \\
    \b C_s \\
  \end{array}
\right)
=\sum_{j=s}^{-1}
\left(
  \begin{array}{c}
    \b b_s^{(j)} \\
    \b c_s^{(j)} \\
  \end{array}
\right)
\lambda^{2(s-j)-1},\quad  s=-1,-2,\cdots
\end{align}
and take $\b A_s^{(0)}=-\b D_s^{(0)}=\frac{1}{2}\lambda^{2s}$.
In that case,  we obtain
\bse
\begin{align}
& \b U_{t_s}=\b K_s=\b L^{s+1}
\left(
  \begin{array}{c}
    Q_{n-1} \\
    -R_{n+1} \\
  \end{array}
\right),\quad  s = -1,-2,\cdots,
\label{AL-hie-n}\\
& \left(
  \begin{array}{c}
    \b b_s^{(j)} \\
    \b c_s^{(j)} \\
  \end{array}
\right)=
\b L_2^{-1} \b L^{j+1}
\left(
  \begin{array}{c}
    Q_{n} \\
    R_{n-1} \\
  \end{array}
\right),\quad  j=-1,-2,\cdots,s,\label{AL-BC-n}
\end{align}
\ese
where
\begin{align}\label{AL-L(-1)}
\b L^{-1}=\b L_1\b L_2^{-1}={}&
\left(
  \begin{array}{cc}
    E^{-1} & 0 \\
    0 & E \\
  \end{array}
\right)+
\left(
  \begin{array}{c}
    Q_n \\
    -R_nE \\
  \end{array}
\right)
(E-1)^{-1}(R_nE^{-1},Q_nE)\nonumber\\
&+\mu_n
\left(
  \begin{array}{c}
    Q_{n-1} \\
    -ER_n \\
  \end{array}
\right)
(E-1)^{-1}(R_n,Q_n)\frac{1}{\mu_n}.
\end{align}

The positive hierarchy \eqref{AL-hie-p} and negative hierarchy \eqref{AL-hie-n} can be combined together by
defining $\b K_0=(Q_n,-R_n)^\mathrm{T}$,
and the whole AL hierarchy is expressed as (see \cite{ZC02b})
\begin{align}\label{AL-hie}
\b U_{\b t_s}=\b K_s=\b L^s
\left(
  \begin{array}{c}
    Q_n \\
    -R_n \\
  \end{array}
\right),\quad
s\in \mathbb{Z},
\end{align}
Let us display the first few flows below,
\bse\label{AL-Ks}
\begin{align}
&\b K_0=
\left(
  \begin{array}{c}
    Q_n \\
    -R_n \\
  \end{array}
\right),\label{AL-K(0)}\\
&\b K_1=\mu_n
\left(
  \begin{array}{c}
    Q_{n+1} \\
    -R_{n-1} \\
  \end{array}
\right),\label{AL-K(+1)}\\
&\b K_{-1}=\mu_n
\left(
  \begin{array}{c}
    Q_{n-1} \\
    -R_{n+1} \\
  \end{array}
\right),\label{AL-K(-1)}\\
&\b K_2=\mu_n
\left(
  \begin{array}{c}
    \mu_{n+1}Q_{n+2}-Q_{n+1}(Q_{n}R_{n-1}+Q_{n+1}R_{n}) \\
    -\mu_{n-1}Q_{n-2}+R_{n-1}(Q_{n+1}R_{n}+Q_{n}R_{n-1}) \\
  \end{array}
\right),\label{AL-K(+2)}\\
&\b K_{-2}=\mu_n
\left(
  \begin{array}{c}
    \mu_{n-1}Q_{n-2}-Q_{n-1}(Q_{n}R_{n+1}+Q_{n-1}R_{n}) \\
    -\mu_{n+1}Q_{n+2}+R_{n+1}(Q_{n-1}R_{n}+Q_{n}R_{n+1}) \\
  \end{array}
\right).\label{AL-K(-2)}
\end{align}
\ese

As for the Lax pairs, we can first recover $(\b B_s,\b C_s)^{T}$ from the expansions \eqref{AL-expan-pos} and \eqref{AL-expan-neg}
with \eqref{AL-BC-p} and \eqref{AL-BC-n}, respectively,
and then recover $(\b A_s,\b D_s)^T$ from \eqref{AL-zcr-1} and \eqref{AL-zcr-4}.
We list several matrices $\b N_s$ in Appendix \ref{app-A} where $\b N_0$ is for the equation $\b U_{\b t_0}=\b K_0$.

\subsubsection{The sdAKNS hierarchy derived from the AL hierarchy \cite{ZC10b}}

The sdAKNS hierarchy can be given through combining the AL flows in a suitable way.
Define initial flows
\bse\label{dA-aux}
\begin{align}
&\b K_{\ss A,0}=\b K_0,\label{dA-aux-a}\\
&\b K_{\ss A,1}=\frac{1}{2}(\b K_1-\b K_{-1}),\label{dA-aux-b}
\end{align}
\ese
where $\b K_0$ and $\b K_{\pm 1}$ are the AL flows that we obtained previously.
The sdAKNS hierarchy is \cite{ZC10b}
\begin{align}\label{dA-hie}
\b U_{\b t_s}=\b K_{\ss A,s}=
\left\{
\begin{array}{ll}
  \b{\c L}_{\ss A}^j\b K_{\ss A,0}, & s=2j, \\
  \b{\c L}_{\ss A}^j\b K_{\ss A,1}, & s=2j+1
\end{array}
\right.
\end{align}
for $j=0,1,\cdots$, where
\begin{align}\label{dA-ro-II}
\b{\c L}_{\ss A}={}&\b L-2I+\b L^{-1}\nonumber\\
={}&
\left(
  \begin{array}{cc}
    E-2+E^{-1} & 0 \\
    0 & E-2+E^{-1} \\
  \end{array}
\right)+\mu_n
\left(
  \begin{array}{c}
    Q_{n-1}-Q_{n+1}E \\
    R_{n-1}-R_{n+1}E \\
  \end{array}
\right)
(E-1)^{-1}(R_n,Q_n)\frac{1}{\mu_n}\nonumber\\
&+
\left(
  \begin{array}{c}
    -Q_nE \\
    R_n \\
  \end{array}
\right)
(E-1)^{-1}(R_nE,Q_nE^{-1})-
\left(
  \begin{array}{c}
    -Q_n \\
    R_nE \\
  \end{array}
\right)
(E-1)^{-1}(R_nE^{-1},Q_nE).
\end{align}
The first few flows of the sdAKNS hierarchy are
\bse\label{dA-Ks}
\begin{align}
&\b K_{\ss A,0}=
\left(
  \begin{array}{c}
    Q_n \\
    -R_n \\
  \end{array}
\right),\label{dA-K0}\\
&\b K_{\ss A,1}=\frac{1}{2}\mu_n
\left(
  \begin{array}{c}
    Q_{n+1}-Q_{n-1} \\
    R_{n+1}-R_{n-1} \\
  \end{array}
\right),\label{dA-K1}\\
&\b K_{\ss A,2}=
\left(
  \begin{array}{c}
    Q_{n+1}-2Q_n+Q_{n-1}-Q_nR_n(Q_{n+1}+Q_{n-1})\\
    -R_{n+1}+2R_n-R_{n-1}+Q_nR_n(R_{n+1}+R_{n-1}) \\
  \end{array}
\right),\label{dA-K2}\\
&\b K_{\ss A,3}=\frac{1}{2}\mu_n
\left(
  \begin{array}{c}
    \begin{array}{l}
      (E-E^{-1})(Q_{n+1}-2Q_n+Q_{n-1}) \\
      \qquad+Q_{n+1}Q_{n+2}R_{n+1}-Q_n(Q_{n+1}R_{n-1}-Q_{n-1}R_{n+1}) \\
      \qquad-Q_{n-2}Q_{n-1}R_{n-1}-R_n(Q_{n+1}^2-Q_{n-1}^2)
    \end{array}\\
    \begin{array}{l}
      (E-E^{-1})(R_{n+1}-2R_n+R_{n-1}) \\
      \qquad+Q_{n-1}R_{n-2}R_{n-1}+R_n(Q_{n+1}R_{n-1}-Q_{n-1}R_{n+1}) \\
      \qquad-Q_{n+1}R_{n+1}R_{n+2}-Q_n(R_{n+1}^2-R_{n-1}^2)
    \end{array}\\
  \end{array}
\right).\label{dA-K3}
\end{align}
\ese

\subsubsection{The sdAKNS hierarchy and Lax pairs: revisit}\label{sec:dA-Lax}

For the equation
\begin{equation}
\b U_{\b t_s}=\b K_{\ss A,s}
\label{dA-eq-s}
\end{equation}
in the sdAKNS hierarchy, the flow $\b K_{\ss A,s}$ is actually some combination of the AL flows $\{\b K_j\}$. So, we can
suppose that the Lax pair of the equation \eqref{dA-eq-s} is of the following form
\bse\label{dA-Lax}
\begin{align}
&E\b\Phi=\b M_{\ss A}\b\Phi,\quad
\b M_{\ss A}=
\left(
  \begin{array}{cc}
    \lambda & Q_n \\
    R_n & \frac{1}{\lambda} \\
  \end{array}
\right),\label{dA-Lax-a}\\
&\b\Phi_{\b t_s}=\b N_{\ss A,s}\b\Phi,\quad
\b N_{\ss A,s}=
\left(
  \begin{array}{cc}
    \b A_{\ss A,s} & \b B_{\ss A,s} \\
    \b C_{\ss A,s} & \b D_{\ss A,s} \\
  \end{array}
\right),
\label{dA-Lax-b}
\end{align}
\ese
where \eqref{dA-Lax-a} is just  the AL spectral problem \eqref{AL-Lax-a},
and $\b N_{\ss A,s}$ should be the corresponding combinations of $\{\b N_j\}$.
For example, since $\b K_{\ss A, 1}=\frac{1}{2}(\b K_1-\b K_{-1})$, we have
\[\b N_{\ss A,1}=\frac{1}{2}(\b N_{1}-\b N_{-1})
=\frac{1}{2}
\left(
  \begin{array}{cc}
    \frac{1}{2}\lambda^2-Q_nR_{n-1}-\frac{1}{2}\lambda^{-2} & Q_n\lambda+Q_{n-1}\lambda^{-1} \\
    R_{n-1}\lambda+R_n\lambda^{-1} & -\frac{1}{2}\lambda^2-Q_{n-1}R_n+\frac{1}{2}\lambda^{-2} \\
  \end{array}
\right),\]
where $\b N_{\pm 1}$ are given in Appendix \ref{app-A}.
Hoverer, it is hard, for a generic $s$, to give a clear description  that how $\b A_{\ss A,s}$, $\b B_{\ss A,s}$,
$\b C_{\ss A,s}$    and $\b D_{\ss A,s}$ are expressed through the Laurent polynomials in $\lambda$.
Note that the compatible condition (zero curvature equation) of \eqref{dA-Lax} reads
\begin{align}\label{dA-zcr}
\b M_{\ss A,\b t_s}=(E\b N_{\ss A,s})\b M_{\ss A}-\b M_{\ss A}\b N_{\ss A,s},
\end{align}
i.e.
\bse\label{dA-zcr-14}
\begin{align}
&\b A_{\ss A,s}=-\frac{1}{\lambda}(E-1)^{-1}(R_nE,-Q_n)
\left(
  \begin{array}{c}
    \b B_{\ss A,s} \\
    \b C_{\ss A,s} \\
  \end{array}
\right)
+\b A_{\ss A,s}^{(0)},\label{dA-zcr-1}\\
&\b D_{\ss A,s}=\frac{1}{\lambda}(E-1)^{-1}(R_n,-Q_nE)
\left(
  \begin{array}{c}
    \b B_{\ss A,s} \\
    \b C_{\ss A,s} \\
  \end{array}
\right)
+\b D_{\ss A,s}^{(0)}\label{dA-zcr-4}
\end{align}
\ese
and
\begin{align}\label{dA-zcr-23}
\b U_{\b t_s}=(\lambda\b L_1-\frac{1}{\lambda}\b L_2)
\left(
  \begin{array}{c}
    \b B_{\ss A,s} \\
    \b C_{\ss A,s} \\
  \end{array}
\right)
+(\b A_{\ss A,s}^{(0)}-\b D_{\ss A,s}^{(0)})
\left(
  \begin{array}{c}
    Q_n \\
    -R_n \\
  \end{array}
\right),
\end{align}
where $\b A_{\ss A,s}^{(0)}=\b A_{\ss A,s}|_{\b U=0}$ and $\b D_{\ss A,s}^{(0)}=\b D_{\ss A,s}|_{\b U=0}$.

To find a suitable expression of  $\b N_{\ss A,s}$ available for investigating continuous limits,
let us re-derive the sdAKNS hierarchy.
In the following we make use of the G\^ateaux derivative to derive the sdAKNS flows $\{\b K_{\ss A,s}\}$ and the related
$\{\b N_{\ss A,s}\}$.
For the given functions $\b F=\b F(\b U)$ and $\b G=\b G(\b U)$,
\[\b F'[G]=\frac{\partial}{\partial\epsilon} \b F(\b U+\epsilon \b G)\big|_{\epsilon=0}\]
is called the G\^ateaux derivative of $\b F(\b U)$ w.r.t. $\b U$ in the direction $\b G(\b U)$.
For $\b F(\b U)$ it is easy to see that $\b F_{\b t}(\b U)=\b F'[\b U_{\b t}]$.
Adopting this fact, we may rewrite the zero curvature equation \eqref{dA-zcr} as
\begin{align}\label{dA-zcr-gd}
\b M'_{\ss A}[\b K_{\ss A,s}]=(E\b N_{\ss A,s})\b M_{\ss A}-\b M_{\ss A}\b N_{\ss A,s},
\end{align}

To derive the Lax pais of the sdAKNS hierarchy, let us consider
\begin{align}\label{dA-ir}
\b M_{\ss A}'\Big[\b X_{\ss A}-\Big(\lambda-\frac{1}{\lambda}\Big)^2\b Y_{\ss A}\Big]=(E\b{\f N}_{\ss A})\b M_{\ss A}-\b M_{\ss A}\b{\f N}_{\ss A},
\end{align}
where $\b X_{\ss A}=(\b X_{\ss A,1}, \b X_{\ss A,2})^T$ and $\b Y_{\ss A}=(\b Y_{\ss A,1}, \b Y_{\ss A,2})^T$ are
vector functions of $\b U$ but independent of $\lambda$,
and
\begin{align*}
\b{\f N}_{\ss A}=
\left(
  \begin{array}{cc}
    \b{\f A}_{\ss A} & \b{\f B}_{\ss A} \\
    \b{\f C}_{\ss A} & \b{\f D}_{\ss A} \\
  \end{array}
\right).
\end{align*}
When $\b Y_{\ss A}=0$, we assign the following two initial flows
\[\b X_{\ss A}=\b K_{\ss A,0}\quad \hbox{and} \quad  \b X_{\ss A}=\b K_{\ss A,1},\]
where $\b K_{\ss A,0}$ and $\b K_{\ss A,1}$ are defined in \eqref{dA-aux}.
Correspondingly we can take
\[\b {\mathfrak{N}}_{\ss A}=\b N_{\ss A,0}\quad \hbox{and} \quad  \b {\mathfrak{N}}_{\ss A}=\b N_{\ss A,1},\]
respectively.
When $\b Y_{\ss A}\neq 0$, we restrict $\b {\mathfrak{N}}_{\ss A}|_{\b U=0}=0$ and rewrite \eqref{dA-ir} as
\begin{align}\label{dA-ir-23}
\b X_{\ss A}-\Big(\lambda-\frac{1}{\lambda}\Big)^2\b Y_{\ss A}=(\lambda\b L_1-\frac{1}{\lambda}\b L_2)
\left(
  \begin{array}{c}
    \b{\f B}_{\ss A} \\
    \b{\f C}_{\ss A} \\
  \end{array}
\right)
\end{align}
and
\bse\label{dA-zcr-ir-14}
\begin{align}
&\b {\mathfrak{A}}_{\ss A}=-\frac{1}{\lambda}(E-1)^{-1}(R_nE,-Q_n)
\left(
  \begin{array}{c}
    \b {\mathfrak{B}}_{\ss A} \\
    \b {\mathfrak{C}}_{\ss A}
  \end{array}
\right),\label{dA-zcr-ir-1}\\
&\b {\mathfrak{D}}_{\ss A}=\frac{1}{\lambda}(E-1)^{-1}(R_n,-Q_nE)
\left(
  \begin{array}{c}
    \b {\mathfrak{B}}_{\ss A} \\
    \b {\mathfrak{C}}_{\ss A}
  \end{array}
\right).
\label{dA-zcr-ir-4}
\end{align}
\ese
To solve \eqref{dA-ir-23}, we  expand $(\b{\f B}_{\ss A},\b{\f C}_{\ss A})^\mathrm{T}$ as
\begin{align}\label{dA-ir-expan}
\left(
  \begin{array}{c}
    \b{\f B}_{\ss A} \\
    \b{\f C}_{\ss A} \\
  \end{array}
\right)=
\left(
  \begin{array}{c}
    \b{\f b}_{\ss A}^+ \\
    \b{\f c}_{\ss A}^+ \\
  \end{array}
\right)\lambda+
\left(
  \begin{array}{c}
    \b{\f b}_{\ss A}^- \\
    \b{\f c}_{\ss A}^- \\
  \end{array}
\right)\frac{1}{\lambda}
\end{align}
and substitute it into \eqref{dA-ir-23}.
Then we have
\bse\label{dA-ir-coeff}
\begin{align}
&\b Y_{\ss A}=-\b L_1
\left(
  \begin{array}{c}
    \b{\f b}_{\ss A}^+ \\
    \b{\f c}_{\ss A}^+ \\
  \end{array}
\right),\label{dA-ir-coeff-a}\\
&\b X_{\ss A}+2\b Y_{\ss A}=\b L_1
\left(
  \begin{array}{c}
    \b{\f b}_{\ss A}^- \\
    \b{\f c}_{\ss A}^- \\
  \end{array}
\right)-\b L_2
\left(
  \begin{array}{c}
    \b{\f b}_{\ss A}^+ \\
    \b{\f c}_{\ss A}^+ \\
  \end{array}
\right),\label{dA-ir-coeff-b}\\
&\b Y_{\ss A}=\b L_2
\left(
  \begin{array}{c}
    \b{\f b}_{\ss A}^- \\
    \b{\f c}_{\ss A}^- \\
  \end{array}
\right),\label{dA-ir-coeff-c}
\end{align}
\ese
which  gives rise to
\begin{align}\label{dA-ir-XY}
\b X_{\ss A}=\b{\c L}_{\ss A}\b Y_{\ss A}
\end{align}
and
\begin{align}\label{dA-ir-BC}
\left(
  \begin{array}{c}
    \b{\f B}_{\ss A} \\
    \b{\f C}_{\ss A} \\
  \end{array}
\right)
=(-\lambda\b L_1^{-1}+\frac{1}{\lambda}\b L_2^{-1})\b Y_{\ss A},
\end{align}
where $\b{\c L}_{\ss A}=\b L-2I+\b L^{-1}$ is given as \eqref{dA-ro-II}.
This actually indicates the recursive relation
\begin{align}\label{dA-ir-nn}
\b M_{\ss A}'\Big[\b{\c L}_{\ss A}\b Y_{\ss A}-\Big(\lambda-\frac{1}{\lambda}\Big)^2\b Y_{\ss A}\Big]
=(E\b{\f N}_{\ss A})\b M_{\ss A}-\b M_{\ss A}\b{\f N}_{\ss A}.
\end{align}
Repeating such a relation we can reach the form
\begin{align}\label{dA-ir-nnn}
\b M_{\ss A}'\Big[\b{\c L}_{\ss A}^j\b Y_{\ss A}-\Big(\lambda-\frac{1}{\lambda}\Big)^{2j}\b Y_{\ss A}\Big]
=(E\b{\f N}_{\ss A})\b M_{\ss A}-\b M_{\ss A}\b{\f N}_{\ss A},
\end{align}
where in the matrix $\b{\f N}_{\ss A}$, we denote
\begin{align}\label{dA-ir-BC-nnn}
\left(
  \begin{array}{c}
    \b{\f B}_{\ss A} \\
    \b{\f C}_{\ss A} \\
  \end{array}
\right)
=\sum^j_{k=1} \Big(\lambda-\frac{1}{\lambda}\Big)^{2(j-k)}\Bigl(-\lambda\b L_1^{-1}+\frac{1}{\lambda}\b L_2^{-1}\Bigr)\b{\c L}_{\ss A}^{k-1}\b Y_{\ss A}.
\end{align}
Thus, we can take $\b Y_{\ss A}=\b K_{\ss A,0}$ and $\b Y_{\ss A}=\b K_{\ss A,1}$ in \eqref{dA-ir-nnn}, respectively,
and obtain the following zero curvature representation of the flow $\b K_{\ss A,s}$,
\begin{align}\label{dA-zcr-gd-nnn}
\b M'_{\ss A}[\b K_{\ss A,s}]=(E\b N_{\ss A,s})\b M_{\ss A}-\b M_{\ss A}\b N_{\ss A,s}, ~~ s=0,1,\cdots,
\end{align}
where the elements of the matrix $\b N_{\ss A,s}$ are given by
\bse\label{dA-BC}
\begin{align}
\left(
  \begin{array}{c}
    \b B_{\ss A,s} \\
    \b C_{\ss A,s} \\
  \end{array}
\right)
={}&\sum_{k=1}^j\left(\lambda-\frac{1}{\lambda}\right)^{2(j-k)}
\left(-\lambda\b L_1^{-1}+\frac{1}{\lambda}\b L_2^{-1}\right)\b{\c L}_{\ss A}^{k-1}\b K_{\ss A,0},\quad
s=2j,\label{dA-BC-even}\\
\left(
  \begin{array}{c}
    \b B_{\ss A,s} \\
    \b C_{\ss A,s} \\
  \end{array}
\right)
={}&\sum_{k=1}^j\left(\lambda-\frac{1}{\lambda}\right)^{2(j-k)}
\left(-\lambda\b L_1^{-1}+\frac{1}{\lambda}\b L_2^{-1}\right)\b{\c L}_{\ss A}^{k-1}\b K_{\ss A,1}\nonumber\\
&+\frac{1}{2}\left(\lambda-\frac{1}{\lambda}\right)^{2j}
\left(
  \begin{array}{c}
    \lambda Q_n+\frac{1}{\lambda}Q_{n-1} \\
    \lambda R_{n-1}+\frac{1}{\lambda}R_{n} \\
  \end{array}
\right),\quad
s=2j+1\label{dA-BC-odd}
\end{align}
\ese
and
\bse\label{dA-AD}
\begin{align}
&\b A_{\ss A,s}=-\frac{1}{\lambda}(E-1)^{-1}(R_nE,-Q_n)
\left(
  \begin{array}{c}
    \b B_{\ss A,s} \\
    \b C_{\ss A,s} \\
  \end{array}
\right)
+\b A_{\ss A,s}^{(0)},
\label{dA-A}\\
&\b D_{\ss A,s}=\frac{1}{\lambda}(E-1)^{-1}(R_n,-Q_nE)
\left(
  \begin{array}{c}
    \b B_{\ss A,s} \\
    \b C_{\ss A,s}
  \end{array}
\right)+\b D_{\ss A,s}^{(0)},
\label{dA-D}\\
&\b A_{\ss A,s}^{(0)}=-\b D_{\ss A,s}^{(0)}=
\left\{
\begin{array}{ll}
  \frac{1}{2}(\lambda-\frac{1}{\lambda})^s, & s=2j, \\
  \frac{1}{4}(\lambda-\frac{1}{\lambda})^{s-1}(\lambda^2-\frac{1}{\lambda^2}), & s=2j+1,
\end{array}
\right.\label{dA-AD-0}
\end{align}
for $j=0,1,\cdots$.
\ese

The first four $\b N_{\ss A,s}$ are listed in Appendix \ref{app-B}.

We note that the Lax pairs of the sdAKNS hierarchy are unique once we restrict
\[\b N_{\ss A,s}|_{\b U=0}=
\left(\begin{array}{cc} \b A_{\ss A,s}^{(0)}  &  0\\
                        0 & \b D_{\ss A,s}^{(0)}
                        \end{array}
\right),\]
where $\b A_{\ss A,s}^{(0)}$ and $\b D_{\ss A,s}^{(0)}$ are defined as \eqref{dA-AD-0}.
This is guaranteed by the following fact (cf. \cite{ZC02b,ZNBC06}).

\begin{prop}\label{prop:dA-unique}
Suppose that  $\b X=(\b X_{1}, \b X_{2})^T$ is a vector function of $\b U$ but independent of $\lambda$
and $\b N$ is  a $2\times2$ matrix Laurent polynomial in $\lambda$,
living on $\b U$.
Then the matrix equation
\begin{align}\label{dA-test}
\b M_{\ss A}'[\b X]=(E\b N)\b M_{\ss A}-\b M_{\ss A}\b N, \quad \b N|_{\b U=0}=0
\end{align}
has only zero solution $\b X=0,\b N=0$.
\end{prop}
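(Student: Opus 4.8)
The plan is to expand everything in powers of the spectral parameter $\lambda$, run an induction on the degree span of $\b N$ from both ends, and supply the missing relations with a trace identity obtained from $\b M_{\ss A}^{-1}$.

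First I would compute the left-hand side explicitly. Since $\b M_{\ss A}$ depends on $\b U=(Q_n,R_n)^T$ only through its off-diagonal entries, the G\^ateaux derivative in the direction $\b X=(\b X_1,\b X_2)^T$ is
\[\b M_{\ss A}'[\b X]=\begin{pmatrix}0 & \b X_1\\ \b X_2 & 0\end{pmatrix},\]
which is off-diagonal and independent of $\lambda$. Next I would write $\b M_{\ss A}=\lambda P_{+}+\lambda^{-1}P_{-}+W$ with $P_{+}=\mathrm{diag}(1,0)$, $P_{-}=\mathrm{diag}(0,1)$ and $W=\begin{pmatrix}0 & Q_n\\ R_n & 0\end{pmatrix}$, and expand the unknown as a finite Laurent polynomial $\b N=\sum_{k=p}^{q}\b N^{(k)}\lambda^{k}$ whose matrix coefficients $\b N^{(k)}$ are local functions of $\b U$ independent of $\lambda$. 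Substituting into \eqref{dA-test} and collecting each power $\lambda^{\ell}$ yields a three-term recursion coupling $\b N^{(\ell-1)}$, $\b N^{(\ell)}$ and $\b N^{(\ell+1)}$; the source lives only at $\lambda^{0}$, so every power $\lambda^{\ell}$ with $\ell\neq0$ must vanish.

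The core of the argument is the two extreme powers. The coefficient of $\lambda^{q+1}$ reads $(E\b N^{(q)})P_{+}-P_{+}\b N^{(q)}=0$, which forces the two off-diagonal entries of $\b N^{(q)}$ to vanish and its $(1,1)$-entry to be shift-invariant; symmetrically, the coefficient of $\lambda^{p-1}$ gives $(E\b N^{(p)})P_{-}-P_{-}\b N^{(p)}=0$, killing the off-diagonal entries of $\b N^{(p)}$ and forcing its $(2,2)$-entry to be shift-invariant. Here I would invoke the elementary fact that a function of $\b U$ involving only finitely many shifts of $Q_n,R_n$ and satisfying $Ef=f$ identically must be a constant, so by $\b N|_{\b U=0}=0$ it vanishes; this disposes of one diagonal entry at each extreme. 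The remaining obstacle---and the step I expect to be the crux---is the \emph{other} diagonal entry at each extreme, which the power-matching alone leaves undetermined.

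To remove it I would multiply \eqref{dA-test} on the right by $\b M_{\ss A}^{-1}=\mu_n^{-1}\begin{pmatrix}\lambda^{-1} & -Q_n\\ -R_n & \lambda\end{pmatrix}$ (recall $\det\b M_{\ss A}=\mu_n$ from \eqref{mu}) and take the trace. Using $\mathrm{tr}(\b M_{\ss A}\b N\b M_{\ss A}^{-1})=\mathrm{tr}\,\b N$ and $\mathrm{tr}(E\b N)=E\,\mathrm{tr}\,\b N$, the equation collapses to
\[(E-1)\,\mathrm{tr}\,\b N=-\frac{\b X_1R_n+\b X_2Q_n}{\mu_n},\]
whose right-hand side is free of $\lambda$. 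Hence every $\lambda^{k}$-coefficient of $\mathrm{tr}\,\b N$ with $k\neq0$ is shift-invariant and therefore zero, i.e. $\b N$ is traceless off the $\lambda^{0}$ level. Combined with the extreme-power analysis, the vanishing of the $(1,1)$-entry of $\b N^{(q)}$ now forces its $(2,2)$-entry to vanish as well (and dually at $\lambda^{p}$), so both extreme coefficients vanish entirely. Induction then shrinks the span of $\b N$ from both ends until only $\b N^{(0)}$ survives; matching the $\lambda^{\pm1}$ coefficients shows $\b N^{(0)}=0$, whence $\b N=0$, and finally the $\lambda^{0}$ equation gives $\b M_{\ss A}'[\b X]=0$, i.e. $\b X=0$. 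The main difficulty is exactly this interplay: the difference (shift) structure by itself controls only one diagonal entry per extreme, and the trace identity supplies the missing relation that closes the induction.
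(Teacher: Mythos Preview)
The paper does not actually supply a proof of this proposition; it merely states the result and refers to \cite{ZC02b,ZNBC06}. So there is no ``paper's own proof'' to compare against, and your write-up stands on its own.

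Your argument is correct. The decomposition $\b M_{\ss A}=\lambda P_{+}+\lambda^{-1}P_{-}+W$ and the extreme-power reading of \eqref{dA-test} are the natural moves, and you have correctly isolated the real difficulty: the coefficient of $\lambda^{q+1}$ pins down $b^{(q)}=c^{(q)}=0$ and forces $(E-1)a^{(q)}=0$, but says nothing about $d^{(q)}$ (and dually at the bottom). Your trace identity is exactly the missing relation. The computation
\[
\mathrm{tr}\bigl(\b M_{\ss A}'[\b X]\,\b M_{\ss A}^{-1}\bigr)=(E-1)\,\mathrm{tr}\,\b N
\]
is clean, the left-hand side is indeed $-\mu_n^{-1}(R_n\b X_1+Q_n\b X_2)$, and hence $\mathrm{tr}\,\b N^{(k)}$ is shift-invariant for every $k\neq0$. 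Together with $a^{(q)}=0$ this kills $d^{(q)}$, and the induction peels off coefficients from both ends down to $\b N^{(0)}$; at that last step the two equations at $\lambda^{\pm1}$ separately force $a^{(0)}$ and $d^{(0)}$ to be shift-invariant (so zero), and the $\lambda^{0}$ equation then gives $\b X=0$.

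One point you should state rather than merely ``invoke'': the implication $(E-1)f=0\Rightarrow f$ constant requires that $f$ depend on only finitely many shifts of $Q_n,R_n$. This locality is implicit in the paper's phrase ``living on $\b U$'', but since the whole argument hinges on it (every use of $\b N|_{\b U=0}=0$ goes through this step), make the hypothesis explicit once at the outset.
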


\subsection{Conservation laws}\label{sec:dA-cls}

\subsubsection{Conservation laws: Trivial in continuum limit}\label{sec:dA-cls-tri}

Similar to the continuous case of the  AKNS system, we begin with the spectral problem of the sdAKNS hierarchy, i.e. the AL spectral problem
\bse\label{dA-sp}
\begin{align}
&E\b\phi_1=\lambda\b\phi_1+Q_n\b\phi_2,\label{dA-sp-a}\\
&E\b\phi_2=R_n\b\phi_1+\frac{1}{\lambda}\b\phi_2.\label{dA-sp-b}
\end{align}
\ese
Setting $\b\omega_{\ss A}=\frac{\b\phi_2}{\b\phi_1}$, we arrive at the discrete Riccati equation \cite{ZC02a}
\begin{align}\label{dA-Ric-ome}
\lambda E\b\omega_{\ss A}=\frac{1}{\lambda}\b\omega_{\ss A}-Q_n\b\omega_{\ss A}E\b\omega_{\ss A}+R_n,
\end{align}
which is solved by
\begin{align}\label{dA-ome}
\b\omega_{\ss A}=\sum_{j=1}^{\infty}\b\omega_{\ss A}^{(j)}\lambda^{-2j+1},
\end{align}
with
\bse\label{dA-ome-rec}
\begin{align}
&\b\omega_{\ss A}^{(1)}=R_{n-1},\quad \b\omega_{\ss A}^{(2)}=R_{n-2},\label{dA-ome-rec-a}\\
&\b\omega_{\ss A}^{(j+1)}=E^{-1}\b\omega_{\ss A}^{(j)}
-Q_{n-1}\sum_{k=1}^{j-1}\b\omega_{\ss A}^{(k)}E^{-1}\b\omega_{\ss A}^{(j-k)},\quad j=2,3,\cdots.\label{dA-ome-rec-b}
\end{align}
\ese
From the Lax pair \eqref{dA-Lax} we have
\begin{align*}
(E-1)(\ln\b\phi_1)=\ln (1+\lambda^{-1}Q_n\b\omega_{\ss A}),\quad
(\ln\b\phi_1)_{t_s}=\b A_{\ss A,s}+\b B_{\ss A,s}\,\b\omega_{\ss A},
\end{align*}
which provides the formal conservation law
\begin{align}\label{dA-cf-ome}
\big[\ln(1+\lambda^{-1}Q_n\b\omega_{\ss A})\big]_{\b t_s}=(E-1)(\b A_{\ss A,s}+\b B_{\ss A,s}\, \b\omega_{\ss A}).
\end{align}
Then, for the equation $\b U_{\b t_s}=\b K_{\ss A,s}$ in the sdAKNS hierarchy, with corresponding $\b A_{\ss A,s}$ and $\b B_{\ss A,s}$ in the above formula,
we can expand \eqref{dA-cf-ome} in terms of $\lambda^2$ and get
\begin{equation}
\partial_{\b t_s} \sum^{\infty}_{j=1} \b{\varrho}_{\ss A}^{(j)} \lambda^{-2j}=
(E-1) \sum^{\infty}_{j=1} \b{\mathcal{J}}_{\ss A}^{(j)} \lambda^{-2j}.
\label{dA-cf-ome-2}
\end{equation}
The coefficients of $\lambda^{-2j}$  provide the infinitely many conservation laws for the equation $\b U_{\b t_s}=\b K_{\ss A,s}$:
\begin{equation}
\partial_{\b t_s} \b{\varrho}_{\ss A}^{(j)} =
(E-1)  \b{\mathcal{J}}_{\ss A}^{(j)}£¬\quad  j=1,2,\cdots.
\label{dA-cfs-ome-2}
\end{equation}
However, under the continuum limit given in \cite{ZC10b}, all these conservation laws go to the first conservation law
(i.e. $j=1$ in \eqref{A-cls}) of the continuous case (see Proposition \ref{prop:cl-tri}).
In other words, such infinitely many conservation laws are not new in terms of the continuum limit.

\subsubsection{Conservation laws: Meaningful in continuum limit}\label{sec:dA-cls-mean}

We need to derive new forms of the conservation laws so that they
are meaningful in the continuum limit.
To do so, let us introduce
\begin{align}\label{dA-relat}
\b\Omega_{\ss A}(z)=\frac{1}{\lambda}\b\omega_{\ss A}(\lambda),
\end{align}
where $\lambda$ and $z$ are related through
\begin{align}\label{dA-trans}
\lambda=\sqrt{\frac{1+z}{z}}.
\end{align}

Rewriting the discrete Riccati equation \eqref{dA-Ric-ome} in terms of $\b\Omega_{\ss A}$ and $z$, we obtain
\begin{align}\label{dA-Ric-Ome}
\frac{1}{z}\b\Omega_{\ss A}=(E^{-1}-1)\b\Omega_{\ss A}
-\Big(1+\frac{1}{z}\Big)Q_{n-1}\b\Omega_{\ss A}E^{-1}\b\Omega_{\ss A}+R_{n-1}.
\end{align}
To solve it, the following expansion 
\begin{equation}
\b\Omega_{\ss A}(z)=\sum_{j=1}^{\infty}\b\Omega_{\ss A}^{(j)}z^{j}
\label{dA-Ome-exp}
\end{equation}
yields 
\bse\label{dA-Ome-rec}
\begin{align}
&\b\Omega_{\ss A}^{(1)}=R_{n-1},\quad
\b\Omega_{\ss A}^{(2)}=R_{n-2}(1-Q_{n-1}R_{n-1})-R_{n-1},  \label{dA-Ome-rec-a}\\
&\b\Omega_{\ss A}^{(j+1)}=(E^{-1}-1)\b\Omega_{\ss A}^{(j)}-Q_{n-1}
\sum_{k=1}^{j-1}\b\Omega_{\ss A}^{(k)}E^{-1}\b\Omega_{\ss A}^{(j-k)}-Q_{n-1}
\sum_{k=1}^{j}\b\Omega_{\ss A}^{(k)}E^{-1}\b\Omega_{\ss A}^{(j+1-k)}\label{dA-Ome-rec-b}
\end{align}
\ese
for $j=2,3,\cdots.$
The first few $\Omega_{\ss A}^{(j)}$ are
\bse\label{dA-Omej}
\begin{align}
\b\Omega_{\ss A}^{(1)}={}&R_{n-1},\label{dA-Ome0}\\
\b\Omega_{\ss A}^{(2)}={}&R_{n-2}(1-Q_{n-1}R_{n-1})-R_{n-1},\label{dA-Ome1}\\
\b\Omega_{\ss A}^{(3)}={}&R_{n-1}+2R_{n-2}(Q_{n-1}R_{n-1}-1)+Q_{n-1}R_{n-2}^2(Q_{n-1}R_{n-1}-1)\nonumber\\
&+R_{n-3}(Q_{n-2}R_{n-2}-1)(Q_{n-1}R_{n-1}-1),\label{dA-Ome2}\\
\b\Omega_{\ss A}^{(4)}={}&3R_{n-2}+3Q_{n-1}^2R_{n-2}^2+Q_{n-1}^2R_{n-2}^3-R_{n-1}\nonumber\\
&-3Q_{n-1}R_{n-2}R_{n-1}-3Q_{n-1}^2R_{n-2}^2R_{n-1}-Q_{n-1}^3R_{n-2}^3R_{n-1}\nonumber\\
&-Q_{n-2}R_{n-3}^2(Q_{n-2}R_{n-2}-1)(Q_{n-1}R_{n-1}-1)\nonumber\\
&-R_{n-4}(Q_{n-3}R_{n-3}-1)(Q_{n-2}R_{n-2}-1)(Q_{n-1}R_{n-1}-1)\nonumber\\
&-R_{n-3}(Q_{n-2}R_{n-2}-1)(Q_{n-1}R_{n-2}+3)(Q_{n-1}R_{n-1}-1).\label{dA-Ome3}
\end{align}
\ese
Meanwhile, the formal conservation law \eqref{dA-cf-ome} can be written as
\begin{align}\label{dA-cf-Ome}
\big[\ln(1+Q_n\b\Omega_{\ss A})\big]_{\b t_s}=(E-1)(\b{\c A}_{\ss A,s}+\b{\c B}_{\ss A,s}\b\Omega_{\ss A}),
\end{align}
where
\begin{align}
\b{\c A}_{\ss A,s}(z)=\b A_{\ss A,s}(\lambda)\Bigr|_{\lambda=\sqrt{\frac{1+z}{z}}},\quad
\b{\c B}_{\ss A,s}(z)=\lambda\b B_{\ss A,s}(\lambda)\Bigr|_{\lambda=\sqrt{\frac{1+z}{z}}}.
\label{AB}
\end{align}
Note that we have expansions
\bse
\begin{align}
&\ln(1+Q_n\b\Omega_{\ss A})=\sum_{j=1}^\infty\b\rho_{\ss A}^{(j)}z^j,\label{dA-exp-a}\\
& \b{\c A}_{\ss A,s}+\b{\c B}_{\ss A,s}\b\Omega_{\ss A}=
\b{\mathcal{A}}_{\ss A,s}^{(0)}+\sum_{j=1}^\infty\b J_{\ss A}^{(j)}z^j, \label{dA-exp-b}
\end{align}
where $\b{\mathcal{A}}_{\ss A,s}^{(0)}=\b A_{\ss A,s}^{(0)}|_{\lambda=\sqrt{\frac{1+z}{z}}}$
and $\b A_{\ss A,s}^{(0)}$ is defined by \eqref{dA-AD-0}.
\ese
Then, comparing the coefficients of  $z^j$ of the both sides of \eqref{dA-cf-Ome},
we   obtain infinitely many conservation laws
\begin{align}\label{dA-cls}
\partial_{\b t_s}\b\rho_{\ss A}^{(j)}=(E-1)\b J_{\ss A,s}^{(j)},\quad j=1,2,\cdots.
\end{align}
for the equation $\b U_{\b t_s}=\b K_{\ss A,s}$.
Explicit formulae of $\b\rho_{\ss A}^{(j)}$ can be given with the help of the following proposition (see Proposition 2 in \cite{ZCS-JPA}).
\begin{prop}\label{P:2}
The following expansion holds,
\begin{subequations}\label{exp-t}
\begin{equation}
\ln\biggl(1+\sum_{i=1}^{\infty}y_i z^i
\biggr)=\sum_{j=1}^{\infty}h_j(\bfy)z^{j}, \label{ht}
\end{equation}
where
\begin{equation}
h_j(\bfy)=\sum_{||\balpha||=j}(-1)^{|\balpha|-1}(|\balpha|-1)!\frac{\bfy^{\balpha}}{\balpha
!}, \label{htj}
\end{equation}
and
\begin{align}
& \mathbf{y}=(y_1,y_2,\cdots),\quad \balpha=(\alpha_1,\alpha_2,\cdots),\quad \alpha_i\in\{0,1,\cdots\}, \\
& \bfy^{\balpha}=\prod_{i=1}^{\infty}y_i^{\alpha_i},\quad
 {\balpha}!=\prod_{i=1}^{\infty}(\alpha_i !),\quad
|\balpha|=\sum_{i=1}^{\infty}\alpha_i,\quad
||\balpha||=\sum^{\infty}_{i=1} i\alpha_i.
\end{align}
\end{subequations}
The first few of $\{h_j(\bfy)\}$ are
\begin{subequations}\label{ht1-4}
\begin{align}
& h_1(\bfy)=y_1,\\
& h_2(\bfy)=-\frac{1}{2}y_1^2+y_2, \\
& h_3(\bfy)=\frac{1}{3}y_1^3-y_1y_2+y_3,\\
& h_4(\bfy)=-\frac{1}{4}y_1^4+y_1^2y_2-y_1y_3-\frac{1}{2}y_2^2+y_4.
\end{align}
\end{subequations}
\end{prop}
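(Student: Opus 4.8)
The plan is to treat the claimed identity as an equality in the ring of formal power series in $z$ whose coefficients are polynomials in the indeterminates $y_1,y_2,\dots$, and to establish it by substituting the Taylor series of the logarithm and expanding through the multinomial theorem. Writing $w=\sum_{i=1}^{\infty}y_iz^i$, which has zero constant term, the formal expansion
\[
\ln(1+w)=\sum_{m=1}^{\infty}\frac{(-1)^{m-1}}{m}\,w^m
\]
is legitimate precisely because $w$ carries no constant term, so that each power $z^j$ receives contributions from only finitely many values of $m$.

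Next I would expand the $m$-th power by the multinomial theorem,
\[
w^m=\Bigl(\sum_{i=1}^{\infty}y_iz^i\Bigr)^m=\sum_{|\balpha|=m}\frac{m!}{\balpha!}\,\bfy^{\balpha}\,z^{||\balpha||},
\]
the sum running over all multi-indices $\balpha$ with $|\balpha|=m$, since raising $y_iz^i$ to the power $\alpha_i$ produces the $z$-exponent $\sum_i i\alpha_i=||\balpha||$. Substituting this into the logarithmic series and simplifying the prefactor via $\frac{(-1)^{m-1}}{m}\,m!=(-1)^{|\balpha|-1}(|\balpha|-1)!$ (using $|\balpha|=m$) collapses the double sum into a single sum over all nonzero multi-indices,
\[
\ln(1+w)=\sum_{\balpha\neq 0}(-1)^{|\balpha|-1}(|\balpha|-1)!\,\frac{\bfy^{\balpha}}{\balpha!}\,z^{||\balpha||}.
\]
Collecting terms by the power $z^j$ then yields the coefficient $\sum_{||\balpha||=j}(-1)^{|\balpha|-1}(|\balpha|-1)!\,\bfy^{\balpha}/\balpha!$, which is exactly the stated formula for $h_j(\bfy)$.

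The only delicate point, and hence the main obstacle, is justifying the rearrangement of the double summation. This is settled by the observation that for a fixed $j$ the constraint $||\balpha||=\sum_i i\alpha_i=j$ forces $|\balpha|\le j$ together with $\alpha_i=0$ for all $i>j$, so that only finitely many multi-indices contribute to each $z^j$. The reindexing is therefore a finite reordering performed coefficient-by-coefficient in the formal power series ring, and no analytic convergence question arises. A direct verification of the tabulated cases $h_1,\dots,h_4$ confirms the bookkeeping and completes the argument.
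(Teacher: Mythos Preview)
Your argument is correct. The substitution of the formal Taylor series for $\ln(1+w)$ followed by the multinomial expansion of $w^m$ is the standard route to this identity, and your finiteness remark (that $||\balpha||=j$ forces $\alpha_i=0$ for $i>j$ and $|\balpha|\le j$) is exactly what is needed to justify the rearrangement at the level of formal power series.

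As for comparison: the paper does not actually prove this proposition. It is quoted from an external reference (Proposition~2 of \cite{ZCS-JPA}) and only the statement together with the explicit values $h_1,\dots,h_4$ is recorded. So there is no ``paper's own proof'' to compare against; your derivation supplies what the paper simply cites.
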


Thus, for $\b\rho_{\ss A}^{(j)}$ we have
\begin{equation}
\b\rho_{\ss A}^{(j)}=h_j(\bfy),\quad  j=1,2,\cdots,
\label{cl-rhoj-hj}
\end{equation}
with $y_i =Q_n \b\Omega_{\ss A}^{(i)}$.
The first few of $\b\rho_{\ss A}^{(j)}$ are
\bse\label{dA-rhoj}
\begin{align}
\b\rho_{\ss A}^{(1)}={}&Q_nR_{n-1},\label{dA-rho1}\\
\b\rho_{\ss A}^{(2)}={}&-\frac{1}{2}Q_n\big[2R_{n-2}(Q_{n-1}R_{n-1}-1)+R_{n-1}(Q_nR_{n-1}+2)\big],\label{dA-rho2}\\
\b\rho_{\ss A}^{(3)}={}&\frac{1}{3}Q_n^3R_{n-1}^3+Q_n^2R_{n-1}\big[R_{n-1}+R_{n-2}(Q_{n-1}R_{n-1}-1)\big]\nonumber\\
&+Q_n\big[R_{n-1}+2R_{n-2}(Q_{n-1}R_{n-1}-1)+Q_{n-1}R_{n-2}^2(Q_{n-1}R_{n-1}-1)\nonumber\\
&+R_{n-3}(Q_{n-2}R_{n-2}-1)(Q_{n-1}R_{n-1}-1)\big].\label{dA-rho3}
\end{align}
\ese
All the equations in the sdAKNS hierarchy share the same conserved densities.  The associated fluxes depend on the time part of the Lax pairs.
For example, in the case of  $s=2$, we have
\bse\label{dA-cls-AB2}
\begin{align}
&\b{\c A}_{\ss A,2}=\frac{1}{2}\sum_{j=2}^{\infty}(-1)^jz^{-j}-Q_nR_{n-1},\label{dA-A2}\\
&\b{\c B}_{\ss A,2}=Q_n z^{-1}+(Q_n-Q_{n-1}),\label{dA-B2}
\end{align}
\ese
and the first few fluxes are
\bse\label{dA-Jj}
\begin{align}
\b J_{\ss A,2}^{(1)}={}&Q_nR_{n-2}(Q_{n-1}R_{n-1}-1)-Q_{n-1}R_{n-1},\label{dA-J1}\\
\b J_{\ss A,2}^{(2)}={}&Q_n\big[R_{n-3}(Q_{n-2}R_{n-2}-1)+R_{n-2}(Q_{n-1}R_{n-2}-1)\big](Q_{n-1}R_{n-1}-1)\nonumber\\
&+Q_{n-1}\big[R_{n-1}+R_{n-2}(Q_{n-1}R_{n-1}-1)\big],\label{dA-J2}\\
\b J_{\ss A,2}^{(3)}={}&(Q_n-Q_{n-1})\big[R_{n-1}+2R_{n-2}(Q_{n-1}R_{n-1}-1)+Q_{n-1}R_{n-2}^2(Q_{n-1}R_{n-1}-1)\nonumber\\
&+R_{n-3}(Q_{n-2}R_{n-2}-1)(Q_{n-1}R_{n-1}-1)\big]+Q_n\big[3R_{n-2}+3Q_{n-1}R_{n-2}^2\nonumber\\
&+Q_{n-1}^2R_{n-2}^3-R_{n-1}-3Q_{n-1}R_{n-2}R_{n-1}-3Q_{n-1}^2R_{n-2}^2R_{n-1}\nonumber\\
&-Q_{n-1}^3R_{n-2}^3R_{n-1}-Q_{n-2}R_{n-3}^2(Q_{n-2}R_{n-2}-1)(Q_{n-1}R_{n-1}-1)\nonumber\\
&-R_{n-4}(Q_{n-3}R_{n-3}-1)(Q_{n-2}R_{n-2}-1)(Q_{n-1}R_{n-1}-1)\nonumber\\
&-R_{n-3}(Q_{n-2}R_{n-2}-1)(Q_{n-1}R_{n-2}+3)(Q_{n-1}R_{n-1}-1)\big].\label{dA-J3}
\end{align}
\ese

The infinitely many conservation laws are not trivial in the continuum limit (see Sec.\ref{sec:cl-cls}).

\subsubsection{Conservation laws: Combinatorial relation}\label{sec:dA-cls-comb}

The conserved density $\b\rho_{\ss A}^{(j)}$ is actually certain combination of the conserved densities $\{\b{\varrho}_{\ss A}^{(i)}\}$
and the same combinatorial relation is used by those conservation laws with $\b\rho_{\ss A}^{(j)}$.
Let us first prove the following lemma.

\begin{lem}\label{lem-da-cl-com}
If
\begin{equation}
\sum_{j=1}^{\infty}h_j(\bfy)z^j=\sum_{s=1}^{\infty}h_s(\bfx)\biggl(\sum^{\infty}_{k=1}(-1)^{k-1}z^k\biggr)^s,
\label{dA-cf-hy-hx}
\end{equation}
where $\bfx=(x_1,x_2,\cdots)$,
then we have
\begin{equation}
h_j(\bfy)=\sum_{s=1}^{j}(-1)^{j-s}\mathrm{C}^{s-1}_{j-1} h_s(\bfx), \quad j=1,2,\cdots,
\label{dA-cf-hyx-j}
\end{equation}
where
\[\mathrm{C}^n_m=\frac{m!}{n!(m-n)!},\quad m\geq n.\]
\end{lem}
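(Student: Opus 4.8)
The plan is to recognize the inner generating function on the right-hand side of \eqref{dA-cf-hy-hx} in closed form and then re-expand everything as a single power series in $z$. First I would observe that the alternating sum is a geometric series,
\[
\sum_{k=1}^{\infty}(-1)^{k-1}z^k=\frac{z}{1+z},
\]
understood as an identity of formal power series. Substituting this into \eqref{dA-cf-hy-hx} turns the right-hand side into $\sum_{s=1}^{\infty}h_s(\bfx)\bigl(\frac{z}{1+z}\bigr)^s$, so the claimed identity \eqref{dA-cf-hyx-j} is exactly the statement obtained by comparing the coefficients of $z^j$ on the two sides.

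Next I would expand each summand as $\bigl(\frac{z}{1+z}\bigr)^s=z^s(1+z)^{-s}$ and apply the negative binomial series
\[
(1+z)^{-s}=\sum_{m=0}^{\infty}(-1)^m\,\mathrm{C}^{m}_{s+m-1}\,z^m,
\]
so that $\bigl(\frac{z}{1+z}\bigr)^s=\sum_{m=0}^{\infty}(-1)^m\,\mathrm{C}^{m}_{s+m-1}\,z^{s+m}$. The coefficient of $z^j$ here is extracted by setting $m=j-s$ (which forces $1\le s\le j$), giving $(-1)^{j-s}\,\mathrm{C}^{j-s}_{j-1}$. Using the symmetry $\mathrm{C}^{j-s}_{j-1}=\mathrm{C}^{s-1}_{j-1}$ and summing over $s$, the coefficient of $z^j$ on the right-hand side becomes $\sum_{s=1}^{j}(-1)^{j-s}\,\mathrm{C}^{s-1}_{j-1}\,h_s(\bfx)$; equating it with the coefficient $h_j(\bfy)$ coming from the left-hand side yields \eqref{dA-cf-hyx-j}.

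The argument is essentially a routine formal-power-series manipulation, and there are no convergence subtleties since both sides live in the ring of formal power series in $z$ whose coefficients are functions of $\bfx$ (and, through \eqref{cl-rhoj-hj}, of $\bfy$). The only step demanding genuine care — and the most likely place for a sign or index slip — is the negative binomial expansion together with the binomial identity $\mathrm{C}^{j-s}_{j-1}=\mathrm{C}^{s-1}_{j-1}$, so I would verify these against the explicit low-order cases $h_1,\dots,h_4$ listed in \eqref{ht1-4} as a consistency check before asserting the general formula.
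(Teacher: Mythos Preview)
Your proposal is correct, and it reaches the same target identity as the paper but by a different route. The paper also reduces the lemma to showing
\[
\biggl(\sum^{\infty}_{k=1}(-1)^{k-1}z^k\biggr)^s=\sum^{\infty}_{j=s}(-1)^{j-s}\mathrm{C}^{s-1}_{j-1} z^j,
\]
but rather than summing the geometric series in closed form, it proves this by induction on $s$: assuming the formula at level $i$, multiplying by one more copy of the alternating series, and invoking the hockey-stick identity $\sum^m_{k=0}\mathrm{C}^{n}_{n+k}=\mathrm{C}^{n+1}_{n+m+1}$ to collapse the resulting inner sum. Your approach---writing the inner series as $z/(1+z)$ and expanding $z^s(1+z)^{-s}$ via the negative binomial series---is more direct and bypasses both the induction and the auxiliary combinatorial identity; the paper's argument is a bit longer but is self-contained in the sense that it does not presuppose the negative binomial expansion. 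Your suggested consistency check against the explicit $h_1,\dots,h_4$ matches what the paper records in \eqref{dA-cf-hyx-1-4}.
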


\begin{proof}
Expanding the r.h.s. of \eqref{dA-cf-hy-hx} in terms of $z$ and comparing the coefficients of $z^j$, we can find that
\bse\label{dA-cf-hyx-1-4}
\begin{align}
& h_1(\bfy)=h_1(\bfx),\\
& h_2(\bfy)=h_2(\bfx)-h_1(\bfx),\\
& h_3(\bfy)=h_3(\bfx)-2h_2(\bfx)+h_1(\bfx),\\
& h_4(\bfy)=h_4(\bfx)-3h_3(\bfx)+3h_2(\bfx)-h_1(\bfx),
\end{align}
\ese
which cope with the formula \eqref{dA-cf-hyx-j}.
Let us go to prove that \eqref{dA-cf-hyx-j} holds for generic $j$.
This is equivalent to prove
\begin{equation}
\biggl(\sum^{\infty}_{k=1}(-1)^{k-1}z^k\biggr)^s
=\sum^{\infty}_{j=s}(-1)^{j-s}\mathrm{C}^{s-1}_{j-1} z^j.
\label{dA-cf-hxy-z-sj}
\end{equation}
Based on \eqref{dA-cf-hyx-1-4}, let us suppose that \eqref{dA-cf-hxy-z-sj} is true for $s\leq i$.
Then, when $s=i+1$ we have
\begin{align*}
\biggl(\sum^{\infty}_{k=1}(-1)^{k-1}z^k\biggr)^{i+1}
& =\biggl(\sum^{\infty}_{k=1}(-1)^{k-1}z^k\biggr)\,\biggl(\sum^{\infty}_{k=1}(-1)^{k-1}z^k\biggr)^{i}\\
& =\biggl(\sum^{\infty}_{k=1}(-1)^{k-1}z^k\biggr)\,\sum^{\infty}_{l=i}(-1)^{l-i}\mathrm{C}^{i-1}_{l-1} z^l\\
& = \sum^{\infty}_{j=i+1}(-1)^{j-i-1} \biggl( \sum^{j-1}_{l=i}\mathrm{C}^{i-1}_{l-1} \biggr) z^j.
\end{align*}
Then, by the combinatorial formula
\[\sum^m_{k=0}\mathrm{C}^{n}_{n+k}=\mathrm{C}^{n+1}_{n+m+1},\]
we immediately obtain
\[\biggl(\sum^{\infty}_{k=1}(-1)^{k-1}z^k\biggr)^{i+1}
= \sum^{\infty}_{j=i+1}(-1)^{j-(i+1)}  \mathrm{C}^{i}_{j-1}  \, z^j,\]
which means \eqref{dA-cf-hxy-z-sj} is true for $s=i+1$.
Thus, thanks to the mathematical inductive method, we complete the proof.
\end{proof}

Now, noting that with the help of the polynomials $\{h_j(\bfy)\}$ we have
$\b\rho_{\ss A}^{(j)}=h_j(\bfy)$ and $\b{\varrho}_{\ss A}^{(j)}=h_j(\bfx)$
where $y_i=Q_n \b\Omega_{\ss A}^{(i)}$ and $x_i=Q_n \b\omega_{\ss A}^{(i)}$,
we immediately reach the following relation for $\b\rho_{\ss A}^{(j)}$ and $\b{\varrho}_{\ss A}^{(j)}$.

\begin{prop}
\label{prop:aA-cq-com}
The conserved densities $\b\rho_{\ss A}^{(j)}$ and $\b{\varrho}_{\ss A}^{(j)}$
obey the following combinatorial relation,
\begin{equation}
\b\rho_{\ss A}^{(j)} =\sum_{s=1}^{j}(-1)^{j-s}\mathrm{C}^{s-1}_{j-1} \b{\varrho}_{\ss A}^{(s)}, \quad j=1,2,\cdots.
\label{dA-cf-cq-com-j}
\end{equation}
\end{prop}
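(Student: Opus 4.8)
The plan is to reduce the entire statement to the single generating-function hypothesis \eqref{dA-cf-hy-hx} of Lemma \ref{lem-da-cl-com}. Once that identity is in place, the asserted relation \eqref{dA-cf-cq-com-j} is nothing but the conclusion \eqref{dA-cf-hyx-j} of the Lemma, read through the identifications $\b\rho_{\ss A}^{(j)}=h_j(\bfy)$ and $\b{\varrho}_{\ss A}^{(j)}=h_j(\bfx)$ already recorded in \eqref{cl-rhoj-hj}. Thus the only thing that really needs checking is that the two density generating series are related precisely by the substitution that appears in the Lemma.

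The key observation I would establish first is that the two families of conserved densities come from logarithms with the \emph{same} argument. By the definition \eqref{dA-relat} one has $Q_n\b\Omega_{\ss A}=\lambda^{-1}Q_n\b\omega_{\ss A}$, whence
\[
\ln\bigl(1+Q_n\b\Omega_{\ss A}\bigr)=\ln\bigl(1+\lambda^{-1}Q_n\b\omega_{\ss A}\bigr).
\]
The left-hand side is, by \eqref{dA-exp-a}, the generating function of $\{\b\rho_{\ss A}^{(j)}\}$ in the variable $z$, while the right-hand side is exactly the object whose $\lambda^{-2}$-expansion produced $\{\b{\varrho}_{\ss A}^{(j)}\}$ in \eqref{dA-cf-ome}. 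Hence
\[
\sum_{j=1}^{\infty}\b\rho_{\ss A}^{(j)}z^j=\sum_{j=1}^{\infty}\b{\varrho}_{\ss A}^{(j)}\lambda^{-2j}.
\]
Next I would convert the right-hand side into a series in $z$ using \eqref{dA-trans}: from $\lambda^2=(1+z)/z$ we obtain $\lambda^{-2}=z/(1+z)=\sum_{k=1}^{\infty}(-1)^{k-1}z^k$, so the displayed equality becomes
\[
\sum_{j=1}^{\infty}\b\rho_{\ss A}^{(j)}z^j=\sum_{s=1}^{\infty}\b{\varrho}_{\ss A}^{(s)}\Bigl(\sum_{k=1}^{\infty}(-1)^{k-1}z^k\Bigr)^s,
\]
which under $\b\rho_{\ss A}^{(j)}=h_j(\bfy)$ and $\b{\varrho}_{\ss A}^{(s)}=h_s(\bfx)$ is precisely the hypothesis \eqref{dA-cf-hy-hx}. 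Invoking Lemma \ref{lem-da-cl-com} then delivers \eqref{dA-cf-cq-com-j}.

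The computations are light; the genuine content lies entirely in the first step, namely recognising that $\ln(1+Q_n\b\Omega_{\ss A})$ and $\ln(1+\lambda^{-1}Q_n\b\omega_{\ss A})$ are literally the same function written in two different spectral variables, so that the combinatorial relation is forced by the change of variable \eqref{dA-trans} alone. The one point I would take care over is the legitimacy of passing from the $\lambda^{-2}$-series to the $z$-series: since the inner sum $\sum_{k\geq 1}(-1)^{k-1}z^k$ begins at order $z$, its $s$-th power begins at order $z^s$, so for each fixed $j$ only the finitely many terms with $s\leq j$ contribute to the coefficient of $z^j$. This finiteness makes the interchange of summations a well-defined identity of formal power series (equivalently, convergent for $|z|$ small), which closes the argument.
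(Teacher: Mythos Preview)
Your proof is correct and follows essentially the same route as the paper: identify $\b\rho_{\ss A}^{(j)}=h_j(\bfy)$, $\b{\varrho}_{\ss A}^{(j)}=h_j(\bfx)$, and then invoke Lemma~\ref{lem-da-cl-com}. In fact you are more explicit than the paper, which simply states the identifications and says the proposition follows ``immediately'': you actually verify the hypothesis \eqref{dA-cf-hy-hx} of the Lemma by observing from \eqref{dA-relat} that $\ln(1+Q_n\b\Omega_{\ss A})=\ln(1+\lambda^{-1}Q_n\b\omega_{\ss A})$ and then substituting $\lambda^{-2}=z/(1+z)=\sum_{k\ge1}(-1)^{k-1}z^k$, which the paper leaves implicit.
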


\subsection{Reductions}\label{sec:dA-reduc}

\subsubsection{The sdKdV hierarchy}\label{sec:sdKdV}

Considering  the odd-numbered equations in the sdAKNS hierarchy under the constraint $(Q_n,R_n)=(Q_n,-1)$, we have
\begin{align}\label{dK-reduc}
\left(
  \begin{array}{c}
    Q_n \\
    -1 \\
  \end{array}
\right)_{\b t_{2s+1}}
&=\b K_{\ss A,2s+1}|_{R_n=-1}=(\b{\c L}_{\ss A}|_{R_n=-1})^s
\left(
  \begin{array}{c}
    \frac{1}{2}\mu_n(Q_{n+1}-Q_{n-1}) \\
    0 \\
  \end{array}
\right)\nonumber\\
&=
\left(
  \begin{array}{c}
    \b L_{\ss K}^s\frac{\mu_n }{2}(Q_{n+1}-Q_{n-1}) \\
    0 \\
  \end{array}
\right),\quad s=0,1,\cdots,
\end{align}
where $\mu_n=1+Q_n$ and
\begin{align}\label{dK-ro}
\b L_{\ss K}=\mu_n(E+E^{-1})-(2-Q_n-Q_{n+1})+\mu_n(Q_{n+1}-Q_{n-1})(E-1)^{-1}\frac{1}{\mu_n}.
\end{align}
Thus, the sdKdV hierarchy is given by
\begin{align}\label{dK-hie}
Q_{n,\b t_{2s+1}}=\b K_{\ss K,2s+1}=\b L_{\ss K}^s\frac{1}{2}\mu_n(Q_{n+1}-Q_{n-1}),\quad s=0,1,\cdots,
\end{align}
and $\b L_{\ss K}$ is the recursion operator of the sdKdV hierarchy.
The first two equations are
\bse
\label{dK-2-eq}
\begin{align}
& Q_{n,\b t_1}=\b K_{\ss K,1}=\frac{1}{2}\mu_n(Q_{n+1}-Q_{n-1}), \label{dK-volterra}\\
& Q_{n,\b t_3}=\b K_{\ss K,3}=\frac{1}{2}\mu_n(E-E^{-1})\big[Q_{n+1}-2Q_n+Q_{n-1}+Q_n(Q_{n+1}+Q_n+Q_{n-1})\big].\label{dK-eq}
\end{align}
\ese

The Lax pairs of those odd-numbered equations in the sdAKNS hierarchy under the constraint $(Q_n,R_n)=(Q_n,-1)$
are reduced to the Lax pairs of the sdKdV hierarchy.
With regards to the scalar forms, eliminating $\b\phi_1$ from the Lax pairs,
one can obtain the scalar form of Lax pairs for the sdKdV hierarchy:
\bse\label{dK-Lax}
\begin{align}
&E^2\b\phi=\Big(\lambda+\frac{1}{\lambda}\Big)E\b\phi-(1+Q_n)\b\phi,\label{dK-Lax-a}\\
&\b\phi_{\b t_{2j+1}}=\alpha_{2j+1}\b\phi+\beta_{2j+1}E\b\phi,\label{dK-Lax-b}
\end{align}
\ese
where we have taken $\b\phi_2=\b\phi$.
For the two equations in \eqref{dK-2-eq},
we have
\begin{align*}
\alpha_1={}&-\frac{1}{4}\lambda^2+\frac{1}{2}(Q_{n-1}-1)-\frac{1}{4}\lambda^{-2},\quad \beta_1=\frac{1}{2}\lambda+\frac{1}{2}\lambda^{-1},\\
\alpha_3={}&-\frac{1}{4}\lambda^4-\frac{1}{2}Q_n\lambda^2+\frac{1}{2}\big[Q_{n-2}(1+Q_{n-1})\\
&+(Q_{n-1}-1)(Q_{n-1}+Q_n-1)\big]-\frac{1}{2}Q_n\lambda^{-2}-\frac{1}{4}\lambda^{-4},\\
\beta_3={}&\frac{1}{2}\lambda^3+\frac{1}{2}(Q_{n-1}+Q_n-1)\lambda+\frac{1}{2}(Q_{n-1}+Q_n-1)\lambda^{-1}+\frac{1}{2}\lambda^{-3}.
\end{align*}

The sdKdV hierarchy obtained here are related to the Volterra (also known as the Langmuir or Kac-van Moerbeke, cf. \cite{MP96a}) hierarchy.
Starting from the linear problems
\bse\label{V-Lax}
\begin{align}
&E^2\b\phi=\zeta E\b\phi-V_n\b\phi,\label{V-Lax-a}\\
&\b\phi_{\b t_{2s+1}}=\b A_{\ss V,2s+1}\b\phi+\b B_{\ss V,2s+1}E\b\phi,\quad s=0,1,\cdots,\label{V-Lax-b}
\end{align}
\ese
where $V_n=V(n,t)$ is a potential function and $\zeta$ is the spectral parameter,
the Volterra hierarchy can be cast as follows,
\begin{align}\label{V-hie}
V_{n,\b t_{2s+1}}=\b K_{\ss V,2s+1}=\b L_{\ss V}^{s}\frac{1}{2}V_n(E-E^{-1})V_n,\quad s=0,1,\cdots,
\end{align}
where the recursion operator $\b L_{\ss V}$ is
\begin{align}\label{V-ro}
\b L_{\ss V}=V_n(1+E^{-1})(EV_nE-V_n)(E-1)^{-1}\frac{1}{V_n}.
\end{align}
The first two equations in the Volterra hierarchy are
\bse\label{V-eq}
\begin{align}
&\b V_{n,\b t_1}=\b K_{\ss V,1}=\frac{1}{2}V_n(E-E^{-1})V_n,\label{V-eq1}\\
&\b V_{n,\b t_3}=\b K_{\ss V,3}=\frac{1}{2}V_n(E-E^{-1})\big[V_n(V_{n+1}+V_n+V_{n-1})\big]-V_n(E-E^{-1})V_n.\label{V-eq2}
\end{align}
\ese
Both equations $\b V_{n,\b t}=\b K_{\ss V,1}$ and $\b V_{n,\b t}=\b K_{\ss V,3}-4\b K_{\ss V,1}$ can be viewed as the
discretizations of the KdV equation \eqref{K-eq} (cf. \cite{MP96a,HHLRW00,Sur03}).
In fact, the sdKdV hierarchy \eqref{dK-hie} and the Volterra hierarchy \eqref{V-hie} are related  through
\begin{subequations}
\label{VK}
\begin{align}
& V_n=1+Q_n,\label{VK-1}\\
& \b L_{\ss K}=\b L_{\ss V}-4,  \label{ro-VK}
\end{align}
\end{subequations}
which  reveals the following combinatorial relations of the sdKdV flows and the Volterra flows,
\begin{align}\label{dK-combin}
\b K_{\ss K,2s+1}=\sum_{i=0}^s
(-4)^{s-i}\,\mathrm{C}^{i}_{s}\,
\b K_{\ss V,2i+1}\Big|_{V_n=1+Q_n},\quad  s=0,1,\cdots.
\end{align}

\subsubsection{The sdmKdV hierarchy}\label{sec:sdMKdV}

To get the sdmKdV hierarchy, taking $R_n=\mp Q_n$ in the odd-numbered equations in the sdAKNS hierarchy \eqref{dA-hie} yields 
\begin{align}\label{dM-reduc}
\left(
  \begin{array}{c}
    Q_n \\
    \mp Q_n \\
  \end{array}
\right)_{\b t_{2s+1}}
&=\b K_{\ss A,2s+1}|_{R_n=\mp Q_n}=(\b{\c L}_{\ss A}|_{R_n=\mp Q_n})^s
\left(
  \begin{array}{c}
    \frac{1}{2}\mu_n^\pm(Q_{n+1}-Q_{n-1}) \\
    \frac{1}{2}\mu_n^\pm(Q_{n+1}-Q_{n-1}) \\
  \end{array}
\right)\nonumber\\
&=
\left(
  \begin{array}{c}
    (\b L_{\ss M}^\pm)^s\frac{1}{2}\mu_n^\pm(Q_{n+1}-Q_{n-1}) \\
    \mp(\b L_{\ss M}^\pm)^s\frac{1}{2}\mu_n^\pm(Q_{n+1}-Q_{n-1}) \\
  \end{array}
\right),\quad s=0,1,\cdots,
\end{align}
where $\mu_n^\pm=1\pm Q_n^2$ and
\begin{align}\label{dM-ro}
\b L_{\ss M}^\pm=\mu_n^\pm(E+E^{-1})-(2\mp Q_nQ_{n+1})\pm\mu_n^{\pm}(Q_{n+1}-Q_{n-1})(E-1)^{-1}Q_n\frac{1}{\mu_n^\pm}.
\end{align}
Thus, the sdmKdV hierarchy is written as
\begin{align}\label{dM-hie}
Q_{n,\b t_{2s+1}}=\b K_{\ss M,2s+1}^\pm=(\b L_{\ss M}^\pm)^s\frac{1}{2}\mu_n^\pm(Q_{n+1}-Q_{n-1}),\quad s=0,1,\cdots,
\end{align}
where $\b L_{\ss M}^\pm$ is the recursion operator.
The first two equations in the hierarchy are
\bse
\begin{align}
&Q_{n,\b t_1}=\frac{1}{2}\mu_n^\pm(Q_{n+1}-Q_{n-1}),\label{dM-eq-0}\\
&Q_{n,\b t_3}=\b K_{\ss M,3}^\pm=\frac{1}{2}\mu_n^{\pm}(E-E^{-1})\big[Q_{n+1}-2Q_n+Q_{n-1}\pm Q_n^2(Q_{n+1}+Q_{n-1})\big].
\label{dM-eq}
\end{align}
\ese
The Lax pairs for the sdmKdV hierarchy are able to be obtained from  those of the odd-numbered equations in the sdAKNS hierarchy by taking $R_n=\mp Q_n$.

\subsubsection{The sdNLS hierarchy}\label{sec:sdNLS}

Taking $\b t_s=i^{s-1}\b t_s$, then the reduction $(Q_n,R_n)=(Q_n,\mp Q_n^*)$ of the sdAKNS hierarchy \eqref{dA-hie} leads to
\bse\label{dN-reduc}
\begin{align}
\left(
  \begin{array}{c}
    Q_n \\
    \mp Q_n^* \\
  \end{array}
\right)_{\b t_s}
&=(-i)^{s-1}\b K_{\ss A,s}|_{R_n=\mp Q_n^*}=(-i)^{s-1}(\b{\c L}_{\ss A}|_{R_n=\mp Q_n^*})^j
\left(
  \begin{array}{c}
    Q_n \\
    \pm Q_n^* \\
  \end{array}
\right)\nonumber\\
&=(-i)^{s-1}(\b{\c L}_{\ss N}^\pm)^j
\left(
  \begin{array}{c}
    Q_n \\
    \pm Q_n^* \\
  \end{array}
\right),\quad s=2j,\label{dN-reduc-even}
\end{align}
\begin{align}
\left(
  \begin{array}{c}
    Q_n \\
    \mp Q_n^* \\
  \end{array}
\right)_{\b t_s}
&=(-i)^{s-1}\b K_{\ss A,s}|_{R_n=\mp Q_n^*}=(-i)^{s-1}(\b{\c L}_{\ss A}|_{R_n=\mp Q_n^*})^j\frac{1}{2}\mu_n^\pm
\left(
  \begin{array}{c}
    Q_{n+1}-Q_{n-1} \\
    \mp(Q_{n+1}^*-Q_{n-1}^*) \\
  \end{array}
\right)\nonumber\\
&=(-i)^{s-1}(\b{\c L}_{\ss N}^\pm)^j\frac{1}{2}\mu_n^\pm
\left(
  \begin{array}{c}
    Q_{n+1}-Q_{n-1} \\
    \mp(Q_{n+1}^*-Q_{n-1}^*) \\
  \end{array}
\right),\quad s=2j+1\label{dN-reduc-odd}
\end{align}
\ese
for $j=0,1,\cdots$, where $\mu_n^\pm=1\pm|Q_n|^2$ and
\begin{align}\label{dN-ro}
\b{\c L}_{\ss N}^\pm={}&
\left(
  \begin{array}{cc}
    E-2+E^{-1} & 0 \\
    0 & E-2+E^{-1} \\
  \end{array}
\right)\nonumber\\
&+
\left(
  \begin{array}{c}
    -Q_nE \\
    \mp Q_n^* \\
  \end{array}
\right)
(E-1)^{-1}(\mp Q_n^*E,Q_nE^{-1})-
\left(
  \begin{array}{c}
    -Q_n \\
    \mp Q_n^*E \\
  \end{array}
\right)
(E-1)^{-1}(\mp Q_n^*E^{-1},Q_nE)\nonumber\\
&+\mu_n^\pm
\left(
  \begin{array}{c}
    -EQ_n \\
    \mp Q_{n-1}^* \\
  \end{array}
\right)
(E-1)^{-1}(\mp Q_n^*,Q_n)\frac{1}{\mu_n^\pm}+\mu_n^\pm
\left(
  \begin{array}{c}
    Q_{n-1} \\
    \pm EQ_n^* \\
  \end{array}
\right)
(E-1)^{-1}(\mp Q_n^*,Q_n)\frac{1}{\mu_n^\pm}.
\end{align}
The sdNLS hierarchy are then present the following,
\begin{align}\label{dN-hie}
\left(
  \begin{array}{c}
    Q_n \\
    \mp Q_n^* \\
  \end{array}
\right)_{\b t_s}
=\b K_{\ss N,s}^\pm=(-i)^{s-1}(\b{\c L}_{\ss N}^\pm)^j
\left\{
\begin{array}{ll}
\left(
  \begin{array}{c}
    Q_n \\
    \pm Q_n^* \\
  \end{array}
\right), & s=2j, \\
\frac{1}{2}\mu_n^\pm
\left(
  \begin{array}{c}
    Q_{n+1}-Q_{n-1} \\
    \mp(Q_{n+1}^*-Q_{n-1}^*) \\
  \end{array}
\right), & s=2j+1
\end{array}
\right.
\end{align}
for $j=0,1,\cdots$.
The third equation in the hierarchy, i.e.
\begin{align}\label{dN-eq}
Q_{n,\b t_2}=\b K_{\ss N,2}^\pm=-i\big[Q_{n+1}-2Q_n+Q_{n-1}\pm|Q_n|^2(Q_{n+1}+Q_{n-1})\big]
\end{align}
is known as the sdNLS equation \cite{APT-book} (or the AL equation).
We note that sometimes the sdNLS hierarchy also means those equations with only $s=2j$ in \eqref{dN-hie}.

\section{Continuum limits}\label{sec:cl}

Since the integrable discretizations usually break the original dispersion relations,
it is not easy, in general, to give a uniform continuum limit, which maps the discrete integrable systems together with their integrable
characteristics to the continuous counterparts (cf. \cite{MP96a,MP98b,MP98c,Sch82}).
In \cite{ZC10b} we have presented a unform continuum limit which sends the whole sdAKNS hierarchy to the AKNS hierarchy.
This continuum limit  also explained the structure deformation
of the Lie algebra of symmetries.
In the following we use the same continuum limit to investigate the
Lax pairs, conservation laws and reductions of the sdAKNS hierarchy.

\subsection{Plan}\label{sec:cl-plan}

Our plan for the continuum limit runs below \cite{ZC10b}:
\begin{itemize}
\item{Replacing $Q_n$ and $R_n$ with $hq_n$ and $hr_n$, where $h$ is the real spacing parameter.}
\item{Let $n\to\infty$ and $h\to 0$ such that $nh$ finite.}
\item{Define continuous variable $x=x_0+nh$, then for a scalar function, for example, $q_n$, one has $q_{n+j}=q(x+jh)$. For convenience we take $x_0=0$.}
\item{Define time coordinate relation $t_s=h^s\b t_s$ for $s=0,1,\cdots$.}
\item{Continuous spectral parameter $\eta$ is defined by $\lambda=e^{h\eta}$.}
\end{itemize}

\subsection{Hierarchy}\label{sec:cl-hie}

In Ref. \cite{ZC10b} we have shown that in the above continuum limit  the sdAKNS hierarchy \eqref{dA-hie}
goes to the continuous AKNS hierarchy \eqref{A-hie}.
Let us briefly review these results.

In the continuum limit described in Sec.\ref{sec:cl-plan}, it can be shown that
\bse\label{cl-K01}
\begin{align}
&\b K_{\ss A,0}=K_{\ss A,0}h+O(h^2),\label{cl-K0}\\
&\b K_{\ss A,1}=K_{\ss A,1}h^2+O(h^3),\label{cl-K1}
\end{align}
\ese
and
\begin{align}\label{cl-ro-II}
\b{\c L}_{\ss A}=\b L_{\ss A}^2h^2+O(h^3).
\end{align}
Then, from the recursive structure of the sdAKNS hierarchy \eqref{dA-hie}, the continuum limits for the flows are
\begin{align}\label{cl-flow}
\b K_{\ss A,s}=K_{\ss A,s}h^{s+1}+O(h^{s+2}),\quad s=0,1,\cdots,
\end{align}
and at the level of equations  we have the following.
\begin{prop}\label{prop-cl-hie}
In the continuum limit described in Sec.\ref{sec:cl-plan}, we have
\begin{align}\label{cl-hie}
\b U_{\b t_s}-\b K_{\ss A,s}=(u_{t_s}-K_{\ss A,s})h^{s+1}+O(h^{s+2}),\quad s=0,1,\cdots.
\end{align}
\end{prop}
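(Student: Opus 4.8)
The plan is to obtain the equation-level statement \eqref{cl-hie} by combining two independent facts: the exact scaling of the left-hand side $\b U_{\b t_s}$ under the substitutions of Sec.~\ref{sec:cl-plan}, and the flow-level asymptotics \eqref{cl-flow}. Since \eqref{cl-flow} is the substantive input, I would first record how it itself follows, by induction, from the building blocks \eqref{cl-K0}, \eqref{cl-K1} and \eqref{cl-ro-II} together with the recursive definition \eqref{dA-hie}.

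For the left-hand side, the rules $Q_n=hq_n$, $R_n=hr_n$ give $\b U=hu$ exactly, while the time relation $t_s=h^s\b t_s$ yields $\partial_{\b t_s}=h^s\partial_{t_s}$; since $h$ is inert under $\partial_{t_s}$, this produces the clean identity $\b U_{\b t_s}=h^{s+1}u_{t_s}$. For \eqref{cl-flow} I would split on the parity of $s$. When $s=2j$ one has $\b K_{\ss A,2j}=\b{\c L}_{\ss A}^{\,j}\b K_{\ss A,0}$; feeding $\b{\c L}_{\ss A}=h^2L_{\ss A}^2+O(h^3)$ from \eqref{cl-ro-II} and $\b K_{\ss A,0}=hK_{\ss A,0}+O(h^2)$ from \eqref{cl-K0} into this, and collecting the leading order, gives $h^{2j+1}L_{\ss A}^{2j}K_{\ss A,0}+O(h^{2j+2})$; recalling $K_{\ss A,0}=(q,-r)^T$ and $L_{\ss A}^{2j}K_{\ss A,0}=K_{\ss A,2j}$ from \eqref{A-hie}, this is exactly $h^{2j+1}K_{\ss A,2j}+O(h^{2j+2})$. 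The odd case $s=2j+1$ is identical, starting instead from $\b K_{\ss A,1}=h^2K_{\ss A,1}+O(h^3)$ in \eqref{cl-K1} and using $L_{\ss A}^{2j}K_{\ss A,1}=L_{\ss A}^{2j+1}(q,-r)^T=K_{\ss A,2j+1}$, producing $h^{2j+2}K_{\ss A,2j+1}+O(h^{2j+3})$. In both cases the exponent is $h^{s+1}$, which is \eqref{cl-flow}. Subtracting then gives $\b U_{\b t_s}-\b K_{\ss A,s}=h^{s+1}u_{t_s}-h^{s+1}K_{\ss A,s}+O(h^{s+2})=(u_{t_s}-K_{\ss A,s})h^{s+1}+O(h^{s+2})$, as claimed.

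The delicate point, and the reason \eqref{cl-ro-II} must be used rather than a naive term-by-term estimate, is the nonlocal factor $(E-1)^{-1}$ inside $\b{\c L}_{\ss A}$: since $E-1\sim h\partial_x$, this operator scales like $h^{-1}\partial_x^{-1}$ and on its own would lower the $h$-order of a function by one, threatening the bookkeeping above. What saves the argument is that the leading contributions of the local part $E-2+E^{-1}\sim h^2\partial_x^2$ and of the nonlocal part conspire so that, on smooth $h$-expandable functions, $\b{\c L}_{\ss A}$ raises the order by exactly two with a genuinely $O(h^3)$ operator remainder; this is precisely what \eqref{cl-ro-II} asserts. The only care needed in the induction is therefore to stay within this class of functions, so that each remainder is $h^{2j+2}$ times a smooth profile (rather than merely bounded by $h^{2j+2}$), which keeps the error strictly subleading when $\b{\c L}_{\ss A}$ is applied once more and lets the estimate iterate uniformly in $j$.
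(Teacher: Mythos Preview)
Your proposal is correct and follows essentially the same approach as the paper: the paper (reviewing results from \cite{ZC10b}) derives \eqref{cl-flow} from the building blocks \eqref{cl-K01}, \eqref{cl-ro-II} together with the recursive definition \eqref{dA-hie}, and then passes to the equation level; you make this explicit by writing $\b U_{\b t_s}=h^{s+1}u_{t_s}$ exactly and subtracting. Your closing remark on the $(E-1)^{-1}$ factor and the need to propagate the error through the induction is a welcome addition that the paper leaves implicit.
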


\subsection{Lax pairs}\label{sec:cl-Lax}

Based on the continuum limit designed in Sec.\ref{sec:cl-plan}, it is easy for one to find that the continuum limit of the spectral problem \eqref{dA-Lax-a}
is
\begin{align}\label{cl-Lax-a}
E\b\Phi-\b M_{\ss A}\b\Phi=(\Phi_x-M_{\ss A}\Phi)h+O(h^2).
\end{align}
To investigate the relations in the continuum limit between the time parts of the Lax pairs \eqref{A-Lax} and \eqref{dA-Lax},
we rewrite \eqref{A-BC} as the following form:
\bse\label{A-BC-oe}
\begin{align}
&\left(
  \begin{array}{c}
    B_{\ss A,s} \\
    C_{\ss A,s} \\
  \end{array}
\right)
=-\sigma\sum_{k=1}^{j}(2\eta)^{2(j-k)}(L_{\ss A}+2\eta)L_{\ss A}^{2(k-1)}
\left(
  \begin{array}{c}
    q \\
    -r \\
  \end{array}
\right),\quad s=2j,\label{A-BC-even}\\
&\left(
  \begin{array}{c}
    B_{\ss A,s} \\
    C_{\ss A,s} \\
  \end{array}
\right)
=-\sigma\sum_{k=1}^{j}(2\eta)^{2(j-k)}(L_{\ss A}+2\eta)L_{\ss A}^{2k-1}
\left(
  \begin{array}{c}
    q \\
    -r \\
  \end{array}
\right)
+(2\eta)^{2j}
\left(
  \begin{array}{c}
    q \\
    r \\
  \end{array}
\right),\quad s=2j+1\label{A-BC-odd}
\end{align}
\ese
for $j=0,1,\cdots$.
A direct calculation yields
\begin{align}\label{cl-T}
\lambda-\frac{1}{\lambda}=2\eta h+O(h^2),\quad
-\lambda\b L_1^{-1}+\frac{1}{\lambda}\b L_2^{-1}=(L_{\ss A}+2\eta)h+O(h^2).
\end{align}
Then, from the expression \eqref{dA-BC} we have
\begin{align*}
\left(
  \begin{array}{c}
    \b B_{\ss A,s} \\
    \b C_{\ss A,s} \\
  \end{array}
\right)
={}&\Bigg[
-\sigma\sum_{k=1}^{j}(2\eta)^{2(j-k)}(L_{\ss A}+2\eta)L_{\ss A}^{2(k-1)}
\left(
  \begin{array}{c}
    q \\
    -r \\
  \end{array}
\right)
\Bigg]
h^{2j}+O(h^{2j+1}),\quad s=2j,\\
\left(
  \begin{array}{c}
    \b B_{\ss A,s} \\
    \b C_{\ss A,s} \\
  \end{array}
\right)
={}&\Bigg[-\sigma\sum_{k=1}^{j}(2\eta)^{2(j-k)}(L_{\ss A}+2\eta)L_{\ss A}^{2k-1}
\left(
  \begin{array}{c}
    q \\
    -r \\
  \end{array}
\right)\\
&+(2\eta)^{2j}
\left(
  \begin{array}{c}
    q \\
    r \\
  \end{array}
\right)
\Bigg]h^{2j+1}+O(h^{2j+2}),\quad s=2j+1
\end{align*}
for $j=0,1,\cdots$, namely
\begin{align}\label{cl-BC}
\left(
  \begin{array}{c}
    \b B_{\ss A,s} \\
    \b C_{\ss A,s} \\
  \end{array}
\right)
=
\left(
  \begin{array}{c}
    B_{\ss A,s} \\
    C_{\ss A,s} \\
  \end{array}
\right)
h^{s}+O(h^{s+1}),\quad s=0,1,\cdots.
\end{align}
Next, substituting \eqref{cl-BC} into \eqref{dA-AD} we   find
\bse\label{cl-AD}
\begin{align}
&\b A_{\ss A,s}=A_{\ss A,s}h^s+O(h^{s+1}),\label{cl-A}\\
&\b D_{\ss A,s}=D_{\ss A,s}h^s+O(h^{s+1}), \label{cl-D}
\end{align}
\ese
for $s=0,1,\cdots$.
Therefore, from the relations \eqref{cl-BC} and \eqref{cl-AD} we conclude that
\begin{align}\label{cl-N}
\b N_{\ss A,s}=N_{\ss A,s}h^s+O(h^{s+1}),\quad s=0,1,\cdots,
\end{align}
and further, we have the following.
\begin{prop}\label{prop-cl-lax}
In the continuum limit described in Sec.\ref{sec:cl-plan}, we have
\bse
\begin{align}\label{cl-Lax-b}
&E\b\Phi-\b M_{\ss A}\b\Phi=(\Phi_x-M_{\ss A}\Phi)h+O(h^2),\\
&\b\Phi_{\b t_s}-\b N_{\ss A,s}\b\Phi=(\Phi_{t_s}-N_{\ss A,s}\Phi)h^s+O(h^{s+1}),\quad s=0,1,\cdots.
\end{align}
\ese
\end{prop}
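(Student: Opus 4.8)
The plan is to recognize that almost all of the work has already been assembled in the preceding paragraphs, so that the proposition reduces to a matter of order bookkeeping. The first identity is nothing but \eqref{cl-Lax-a}, which was obtained directly as the continuum limit of the spectral problem \eqref{dA-Lax-a}; hence only the temporal identity requires a separate argument. For that identity the two ingredients I would invoke are the matrix asymptotics \eqref{cl-N}, namely $\b N_{\ss A,s}=N_{\ss A,s}h^s+O(h^{s+1})$, together with the time-coordinate rescaling $t_s=h^s\b t_s$ prescribed in Sec.\ref{sec:cl-plan}.

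The key step is to convert the discrete time derivative into the continuous one. Since $t_s=h^s\b t_s$, the chain rule gives $\partial_{\b t_s}=h^s\partial_{t_s}$, where on the right the remaining times and the spatial variable $x$ are held fixed. Under the identification of the discrete eigenfunction with its continuous counterpart dictated by the plan (so that $\b\Phi=\Phi+O(h)$, consistent with the spatial relation already used to derive \eqref{cl-Lax-a}), I would apply this operator to $\b\Phi$. Writing $\b\Phi=\Phi+h\Psi+\cdots$, one finds $\partial_{\b t_s}\b\Phi=h^s\partial_{t_s}\Phi+h^s\partial_{t_s}(h\Psi)+\cdots=h^s\Phi_{t_s}+O(h^{s+1})$; the point worth checking is precisely that the $O(h)$ correction to $\b\Phi$, once acted on by $\partial_{\b t_s}=h^s\partial_{t_s}$, is pushed to order $h^{s+1}$ and therefore does not pollute the leading term.

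With $\b\Phi_{\b t_s}=h^s\Phi_{t_s}+O(h^{s+1})$ in hand, I would then subtract $\b N_{\ss A,s}\b\Phi$ and substitute \eqref{cl-N} together with $\b\Phi=\Phi+O(h)$,
\begin{align*}
\b\Phi_{\b t_s}-\b N_{\ss A,s}\b\Phi
&=h^s\Phi_{t_s}+O(h^{s+1})-\bigl(N_{\ss A,s}h^s+O(h^{s+1})\bigr)\bigl(\Phi+O(h)\bigr)\\
&=(\Phi_{t_s}-N_{\ss A,s}\Phi)h^s+O(h^{s+1}),
\end{align*}
which is exactly the claimed temporal identity. Combining this with \eqref{cl-Lax-a} completes the proof. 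The main obstacle is not any single computation but the uniform control of the remainders: one must verify that the $O(h^{s+1})$ errors in \eqref{cl-N} and in the expansion of $\b\Phi$ genuinely survive multiplication and the rescaled differentiation without the leading order collapsing, which is where the careful accounting of \eqref{cl-BC} and \eqref{cl-AD} feeding into \eqref{cl-N} does the real work.
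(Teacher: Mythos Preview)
Your proposal is correct and follows essentially the same route as the paper: the paper assembles \eqref{cl-BC}, \eqref{cl-AD} and hence \eqref{cl-N} in the paragraphs immediately preceding the proposition, and then simply writes ``further, we have the following'' before stating the result. You have correctly identified that the only remaining step is to combine \eqref{cl-N} with the time rescaling $\partial_{\b t_s}=h^s\partial_{t_s}$ and the identification $\b\Phi=\Phi+O(h)$, and your explicit bookkeeping of that final product-of-asymptotics step merely spells out what the paper leaves implicit.
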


\subsection{Conservation laws}\label{sec:cl-cls}

In this part we will investigate the continuum limit of the infinitely many conservation laws obtained in Sec.\ref{sec:dA-cls}
for the sdAKNS hierarchy \eqref{dA-hie}.
We need to examine the relations between $\b\Omega_{\ss A}^{(j)}$ and $\omega_{\ss A}^{(j)}$,
$\b\Omega_{\ss A}$ and $\omega_{\ss A}$,
the Riccati equations \eqref{dA-Ric-Ome} and \eqref{A-Ric},
the formal conservation laws \eqref{dA-cf-Ome} and \eqref{A-cf},
and the infinitely many conservation laws \eqref{dA-cls} and \eqref{A-cls}, respectively.

First, for the relation  between $\b\Omega_{\ss A}^{(j)}$ and $\omega_{\ss A}^{(j)}$, we have the following result.
\begin{lem}
In the continuum limit described in Sec.\ref{sec:cl-plan}, we have
\begin{align}\label{cl-Omej}
\b\Omega_{\ss A}^{(j)}=\omega_{\ss A}^{(j)}h^{j}+O(h^{j+1}),\quad j=1,2,\cdots,
\end{align}
where  $\b\Omega_{\ss A}$ and $\omega_{\ss A}^{(j)}$ are defined in \eqref{dA-Ome-rec} and \eqref{A-ome-rec}, respectively.
\end{lem}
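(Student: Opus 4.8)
The plan is to prove \eqref{cl-Omej} by strong induction on $j$, using the discrete recursion \eqref{dA-Ome-rec-b} for $\b\Omega_{\ss A}^{(j)}$ together with the continuous recursion \eqref{A-ome-rec-b} for $\omega_{\ss A}^{(j)}$. The entire argument rests on a careful power-counting in the spacing parameter $h$.

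First I would record the elementary expansions that the continuum limit of Sec.~\ref{sec:cl-plan} forces. Since $Q_n=hq_n$, $R_n=hr_n$ and $f_{n+m}=f(x+mh)$, every shifted potential satisfies $Q_{n-1}=hq+O(h^2)$ and $R_{n-1}=hr+O(h^2)$; likewise, for any lattice function $f$ one has $(E^{-1}-1)f=-hf_x+O(h^2)$, and applying $E^{-1}$ to a quantity whose leading behaviour is $h^m g(x)$ perturbs it only at relative order $h$, i.e. $E^{-1}(h^m g)=h^m g+O(h^{m+1})$. These three facts are all I need to evaluate the three contributions in \eqref{dA-Ome-rec-b}. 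The base case $j=1$ is immediate: $\b\Omega_{\ss A}^{(1)}=R_{n-1}=hr+O(h^2)=\omega_{\ss A}^{(1)}h+O(h^2)$.

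For the inductive step I would assume $\b\Omega_{\ss A}^{(k)}=\omega_{\ss A}^{(k)}h^{k}+O(h^{k+1})$ for all $k\le j$ and substitute into the recursion for $\b\Omega_{\ss A}^{(j+1)}$. The first term gives $(E^{-1}-1)\b\Omega_{\ss A}^{(j)}=-h^{j+1}\omega_{\ss A,x}^{(j)}+O(h^{j+2})$. In the first sum, each product $\b\Omega_{\ss A}^{(k)}E^{-1}\b\Omega_{\ss A}^{(j-k)}$ equals $\omega_{\ss A}^{(k)}\omega_{\ss A}^{(j-k)}h^{j}+O(h^{j+1})$, and the prefactor $-Q_{n-1}=-hq+O(h^2)$ promotes the whole sum to $-q\sum_{k=1}^{j-1}\omega_{\ss A}^{(k)}\omega_{\ss A}^{(j-k)}h^{j+1}+O(h^{j+2})$.

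The key observation, and the point where the discrete structure genuinely differs from the continuous one, is that the second sum $-Q_{n-1}\sum_{k=1}^{j}\b\Omega_{\ss A}^{(k)}E^{-1}\b\Omega_{\ss A}^{(j+1-k)}$ is subleading: each product there already carries $h^{k}h^{j+1-k}=h^{j+1}$, so after multiplication by $-Q_{n-1}=O(h)$ the entire sum is $O(h^{j+2})$ and drops out at order $h^{j+1}$. Collecting the surviving contributions, the coefficient of $h^{j+1}$ in $\b\Omega_{\ss A}^{(j+1)}$ is exactly $-\omega_{\ss A,x}^{(j)}-q\sum_{k=1}^{j-1}\omega_{\ss A}^{(k)}\omega_{\ss A}^{(j-k)}$, which by \eqref{A-ome-rec-b} is precisely $\omega_{\ss A}^{(j+1)}$, closing the induction. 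I expect the main obstacle to be isolating this leading order cleanly: one must verify both that the ``extra'' sum present in the discrete recursion has no continuous counterpart because it is one power of $h$ too small, and that the backward shifts $E^{-1}$ inside the products never disturb the leading coefficients. Once the power-counting is organized in this way, no genuinely new computation beyond the two recursions themselves is required.
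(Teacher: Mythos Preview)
Your proposal is correct and follows essentially the same inductive argument as the paper's proof, with the helpful addition that you spell out the power-counting for each of the three terms in \eqref{dA-Ome-rec-b} and explain why the second sum is subleading. One small point: since \eqref{dA-Ome-rec-b} is stated only for $j\ge 2$, you should also verify the base case $j=2$ directly (as the paper does), or remark that the recursion in fact remains valid at $j=1$ with the empty-sum convention.
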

\begin{proof}
Let us use mathematical induction method. For $j=1,2$, we can find that
\begin{align*}
&\b\Omega_{\ss A}^{(1)}=R_{n-1}=rh+O(h^2),\\
&\b\Omega_{\ss A}^{(2)}=R_{n-2}(1-Q_{n-1}R_{n-1})-R_{n-1}=r_x h^2 +O(h^3),
\end{align*}
which means the relation \eqref{cl-Omej} holds for $j=1,2$.
Now, we suppose
\begin{align*}
\b\Omega_{\ss A}^{(i)}=\omega_{\ss A}^{(i)}h^{i}+O(h^{i+1})
\end{align*}
are true for any $i\leq j$.
Then, from the recursive relation \eqref{dA-Ome-rec-b} and after some  calculation we find
\begin{align*}
\b\Omega_{\ss A}^{(j+1)}&=(E^{-1}-1)\b\Omega_{\ss A}^{(j)}-Q_{n-1}\sum_{k=1}^{j-1}\b\Omega_{\ss A}^{(k)}E^{-1}\b\Omega_{\ss A}^{(j-k)}
-Q_{n-1}\sum_{k=1}^{j}\b\Omega_{\ss A}^{(k)}E^{-1}\b\Omega_{\ss A}^{(j+1-k)}\\
&=\Big(-\omega_{\ss A,x}^{(j)}-q\sum_{k=1}^{j-1}\omega_{\ss A}^{(k)}\omega_{\ss A}^{(j-k)}\Big)h^{(j+1)}+O(h^{j+2})\\
&=\omega_{\ss A}^{(j+1)}h^{(j+1)}+O(h^{j+2}),
\end{align*}
where the last equality coincides with the recursive relation \eqref{A-ome-rec-b}.
Therefore  \eqref{cl-Omej} holds for any $j\geq 1$.
\end{proof}

Let us now go back to the expansion \eqref{dA-Ome-exp}, i.e.
\begin{equation}
\b\Omega_{\ss A}(z)=\sum_{j=1}^{\infty}\b\Omega_{\ss A}^{(j)}z^{j}.
\label{dA-Ome-exp-2}
\end{equation}
Noting that
\begin{align}\label{cl-z}
z=\frac{1}{\lambda^2-1}=(2\eta)^{-1}h^{-1}+O(1),
\end{align}
and inserting \eqref{cl-Omej} and \eqref{cl-z} into \eqref{dA-Ome-exp-2} we have
\begin{equation}
\b\Omega_{\ss A}(z)=\sum_{j=1}^{\infty}\b\Omega_{\ss A}^{(j)}z^{j}=\sum_{j=1}^{\infty} \omega_{\ss A}^{(j)}(2\eta)^{-j}+O(h),
\label{dA-Ome-ome-exp}
\end{equation}
namely,
\begin{align}\label{cl-Ome}
\b\Omega_{\ss A}=\omega_{\ss A}+O(h).
\end{align}
This gives the relation between $\b\Omega_{\ss A}$ and $\omega_{\ss A}$.

Next, substituting the above relation into the Riccati equation \eqref{dA-Ric-Ome} leads to
\begin{align}\label{cl-Ric}
&\frac{1}{z}\b\Omega_{\ss A}-\Big[(E^{-1}-1)\b\Omega_{\ss A}
-\Big(1+\frac{1}{z}\Big)Q_{n-1}\b\Omega_{\ss A}E^{-1}\b\Omega_{\ss A}+R_{n-1}\Big]\nonumber\\
={}&\big[2\eta\omega_{\ss A}-(\omega_{\ss A,x}-q\omega_{\ss A}^2+r)\big]h+O(h^2),
\end{align}
which means the Riccati equation \eqref{dA-Ric-Ome} goes to the continuous Riccati equation \eqref{A-Ric} in the continuum limit.

Let us look at the formal conservation law \eqref{dA-cf-Ome}.
For the l.h.s. of \eqref{dA-cf-Ome}, by using \eqref{cl-Ome} it is easy for us to see that
\begin{align}\label{cl-cd}
\ln(1+Q_n\b\Omega_{\ss A})=(q\omega_{\ss A})h+O(h^2).
\end{align}
Meanwhile, from the relations \eqref{AB} and \eqref{cl-N} we immediately reach
\bse\label{cl-cAB}
\begin{align}
&\b{\c A}_{\ss A,s}=A_{\ss A}h^s+O(h^{s+1}),\label{cl-cA}\\
&\b{\c B}_{\ss A,s}=B_{\ss A}h^s+O(h^{s+1}),\label{cl-cB}
\end{align}
\ese
which provides
\begin{align}\label{cl-flux}
\b{\c A}_{\ss A,s}+\b{\c B}_{\ss A,s}\b\Omega_{\ss A}=(A_{\ss A,s}+B_{\ss A,s}\omega_{\ss A})h^s+O(h^{s+1}).
\end{align}
Thus, for the relation of the  formal conservation law of \eqref{dA-cf-Ome} and \eqref{A-cf}, we have the following result.
\begin{lem}
In the continuum limit described in Sec.\ref{sec:cl-plan}, we have
\begin{align}
&\big[\ln(1+Q_n\b\Omega_{\ss A}\big]_{\b t_s}-(E-1)(\b{\c A}_{\ss A,s}+\b{\c B}_{\ss A,s}\b\Omega_{\ss A})\nonumber\\
=&{}\big[(q\omega_{\ss A})_{t_s}-(A_{\ss A,s}+B_{\ss A,s}\omega_{\ss A})_x\big]h^{s+1}+O(h^{s+2}),
\end{align}
which describes the relation of the two formal conservation laws.
\end{lem}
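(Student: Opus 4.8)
The plan is to feed the two expansions already in hand---equation \eqref{cl-cd} for the conserved density $\ln(1+Q_n\b\Omega_{\ss A})$ and equation \eqref{cl-flux} for the flux combination $\b{\c A}_{\ss A,s}+\b{\c B}_{\ss A,s}\b\Omega_{\ss A}$---into the two terms on the left-hand side, together with the scalings of the independent variables from Sec.~\ref{sec:cl-plan}, and then read off the leading power of $h$. The guiding observation is that the temporal term and the flux term both enter at order $h^{s+1}$, which is exactly the order at which the continuous formal conservation law \eqref{A-cf} should appear.

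First I would treat the temporal term. The time-coordinate relation $t_s=h^s\b t_s$ of Sec.~\ref{sec:cl-plan} gives the exact operator identity $\partial_{\b t_s}=h^s\partial_{t_s}$, so applying it to \eqref{cl-cd} yields
\begin{align*}
\big[\ln(1+Q_n\b\Omega_{\ss A})\big]_{\b t_s}
&=h^s\partial_{t_s}\big[(q\omega_{\ss A})h+O(h^2)\big]
=(q\omega_{\ss A})_{t_s}\,h^{s+1}+O(h^{s+2}).
\end{align*}
Next I would treat the flux term. Since $x=x_0+nh$, the forward shift expands as $E-1=h\partial_x+O(h^2)$, and acting with it on \eqref{cl-flux} gives
\begin{align*}
(E-1)\big(\b{\c A}_{\ss A,s}+\b{\c B}_{\ss A,s}\b\Omega_{\ss A}\big)
&=\big(h\partial_x+O(h^2)\big)\big[(A_{\ss A,s}+B_{\ss A,s}\omega_{\ss A})h^s+O(h^{s+1})\big]\\
&=(A_{\ss A,s}+B_{\ss A,s}\omega_{\ss A})_x\,h^{s+1}+O(h^{s+2}).
\end{align*}
Subtracting the second line from the first and collecting the common factor $h^{s+1}$ reproduces exactly the right-hand side of the stated identity.

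The one point that deserves care---and the closest thing to an obstacle---is the control of the remainders under the operators $\partial_{\b t_s}$ and $E-1$. I would stress that every expansion used in Sec.~\ref{sec:cl-cls} is a power series in $h$ whose coefficients are smooth functions of $x$ (and of $t_s$) independent of $h$, so that differentiation in $x$ or in $t_s$ preserves the order of a remainder, while the finite shift obeys $E-1=h\partial_x+O(h^2)$ and therefore sends an $O(h^{s+1})$ remainder to an $O(h^{s+2})$ one. With this smoothness understood, the two leading contributions genuinely align at order $h^{s+1}$ and all discarded terms are genuinely $O(h^{s+2})$, which completes the argument.
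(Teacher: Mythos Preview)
Your proposal is correct and follows exactly the paper's approach: the paper establishes \eqref{cl-cd} and \eqref{cl-flux} and then states the lemma as an immediate consequence, without writing out the intermediate steps. You have simply made explicit the operator identities $\partial_{\b t_s}=h^s\partial_{t_s}$ and $E-1=h\partial_x+O(h^2)$ that the paper leaves implicit, and your remark on why the remainders behave under these operators is a welcome clarification that the paper omits.
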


Finally, we focus on the continuum limits of the explicit infinitely many conservation laws \eqref{dA-cls},
i.e.,
\begin{align}\label{dA-cls-cl}
\partial_{\b t_s}\b\rho_{\ss A}^{(j)}=(E-1)\b J_{\ss A,s}^{(j)},\quad j=1,2,\cdots.
\end{align}
The common  conserved densities $\b\rho_{\ss A}^{(j)}$ are determined by
\begin{align*}
&\ln(1+Q_n\b\Omega_{\ss A})=\sum_{j=1}^\infty\b\rho_{\ss A}^{(j)}z^j,
\end{align*}
with the explicit formulae \eqref{cl-rhoj-hj}, i.e.
\begin{equation}
\b\rho_{\ss A}^{(j)}=h_j(\bfy),\quad  j=1,2,\cdots,
\label{cl-rhoj-hj-new}
\end{equation}
with $y_i\doteq  Q_n \b\Omega_{\ss A}^{(i)}$.

To investigate the continuum limit of $\b\rho_{\ss A}^{(j)}$, we introduce \textit{degrees} of functions (cf. \cite{ZC10b}).
\begin{defn}
Under the plan described in Sec.\ref{sec:cl-plan}, a function $\b F(\b U)$  can be expanded as
a series  of $h$.
The order of the leading term of the series is called the \textit{degree} of $\b F(\b U)$,
denoted by $\deg\b F$.
\end{defn}

For example,
\[\deg  Q_n=1,\quad  \deg \b\Omega_{\ss A}^{(i)}=i, \quad \deg y_i=i+1,\quad \deg \b t_s=s, \quad \deg z=-1.\]
Now, looking at the definition of the polynomials $h_j(\bfy)$ \eqref{htj} (as examples see \eqref{ht1-4}), since
$\deg y_i=i+1$, we can find that $\deg h_j(\bfy)=\deg y_j=j+1$ and
\[\lim_{h\to 0}\frac{h_j(\bfy)}{h^{j+1}}=\lim_{h\to 0} \frac{y_j}{h^{j+1}}.\]
Since $y_j\doteq  Q_n \b\Omega_{\ss A}^{(j)}=q \omega_{\ss A}^{(j)} h^{j+1}+O(h^{j+2})$, we immediately have
\begin{equation}
\b\rho_{\ss A}^{(j)}=h_j(\bfy)=q \omega_{\ss A}^{(j)} h^{j+1}+O(h^{j+2}),
\label{cl-rhoj-cl}
\end{equation}
which means in the continuum limit we have $\b\rho_{\ss A}^{(j)}\to \rho_{\ss A}^{(j)}=q \omega_{\ss A}^{(j)}$.

Next, noting that $\deg \b\rho_{\ss A}^{(j)}=j+1$, which is given by \eqref{cl-rhoj-cl}, then it follows from \eqref{dA-cls-cl} that
$\deg \b J_{\ss A,s}^{(j)}$ must be $j+s$.
Therefore, we can suppose that
\begin{equation}
\b J_{\ss A,s}^{(j)}=W_{\ss A,s}^{(j)} h^{j+s}+O(h^{j+s+1}).
\end{equation}
This means, in light of \eqref{cl-z}, we have
\begin{equation}
\b J_{\ss A,s}^{(j)}z^j=W_{\ss A,s}^{(j)} (2\eta)^{-j}h^s +O(h^{s+1}),
\end{equation}
and further, from \eqref{dA-exp-b} we get
\[\b{\c A}_{\ss A,s}+\b{\c B}_{\ss A,s}\b\Omega_{\ss A}-\b{\c A}_{\ss A,s}^{(0)}
=\sum_{j=1}^\infty\b J_{\ss A,s}^{(j)}z^j
=\biggl(\sum_{j=1}^\infty W_{\ss A,s}^{(j)} (2\eta)^{-j}\biggr)h^s +O(h^{s+1}).\]
Meanwhile, from \eqref{cl-flux} and \eqref{A-expn-flux} we find
\begin{align*}
\b{\c A}_{\ss A,s}+\b{\c B}_{\ss A,s}\b\Omega_{\ss A}-\b{\c A}_{\ss A,s}^{(0)}
=&\big[A_{\ss A,s}+B_{\ss A,s}\omega_{\ss A}-\frac{1}{2}(2\eta)^s\big]h^s+O(h^{s+1})\\
=& \biggl(\sum_{j=1}^\infty J_{\ss A,s}^{(j)} (2\eta)^{-j}\biggr)h^s +O(h^{s+1}).
\end{align*}
Thus, comparing the term of $(2\eta)^{-j}$ immediately yields $W_{\ss A,s}^{(j)}=J_{\ss A,s}^{(j)}$.
It gives rise to
\begin{equation}
\b J_{\ss A,s}^{(j)}=J_{\ss A,s}^{(j)} h^{j+s}+O(h^{j+s+1}).
\label{cl-Jj-cl}
\end{equation}

Thus, for the infinitely many conservation laws of the sdAKNS hierarchy, we can conclude in the following proposition.
\begin{prop}\label{prop:cl-cls}
In the continuum limit described in Sec.\ref{sec:cl-plan}, we have
\begin{subequations}\label{cl-cls-cqf}
\begin{align}
& \b\rho_{\ss A}^{(j)}= \rho_{\ss A}^{(j)} h^{j+1}+O(h^{j+2}),\\
& \b J_{\ss A,s}^{(j)}=J_{\ss A,s}^{(j)} h^{j+s}+O(h^{j+s+1}),
\end{align}
\end{subequations}
and
\begin{align}
\partial_{\b t_s} \b\rho_{\ss A}^{(j)} -(E-1)\b J_{\ss A,s}^{(j)}
=(\partial_{t_s}\rho_{\ss A}^{(j)}-\partial_{x} J_{\ss A,s}^{(j)}) h^{j+s+1}+ O(h^{j+s+2}),
\label{cl-cls-cls}
\end{align}
for $s=0,1,\cdots$ and $j=1,2,\cdots$,
which describes the relation of the two sets of infinitely many conservation laws.
\end{prop}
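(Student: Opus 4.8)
The plan is to obtain the proposition by collecting the order-by-order estimates already produced in this subsection and combining them with the leading behavior of the two discrete operations $\partial_{\b t_s}$ and $E-1$ under the limit of Sec.~\ref{sec:cl-plan}. The first two identities in \eqref{cl-cls-cqf} are exactly \eqref{cl-rhoj-cl} and \eqref{cl-Jj-cl}: the density estimate follows from $\b\rho_{\ss A}^{(j)}=h_j(\bfy)$ together with the degree counting $\deg\b\Omega_{\ss A}^{(i)}=i$, $\deg y_i=i+1$, forcing $\deg h_j(\bfy)=j+1$ with leading coefficient $q\,\omega_{\ss A}^{(j)}=\rho_{\ss A}^{(j)}$; the flux estimate was fixed by matching $W_{\ss A,s}^{(j)}=J_{\ss A,s}^{(j)}$ through the two expansions of $\b{\c A}_{\ss A,s}+\b{\c B}_{\ss A,s}\b\Omega_{\ss A}-\b{\c A}_{\ss A,s}^{(0)}$. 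For these two claims I would simply invoke \eqref{cl-rhoj-cl} and \eqref{cl-Jj-cl}.

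The only genuinely new step is the conservation-law relation \eqref{cl-cls-cls}, which I would prove by tracking how each side of \eqref{dA-cls-cl} scales. First, the time-coordinate rule $t_s=h^s\b t_s$ of the plan gives, by the chain rule, $\partial_{\b t_s}=h^s\partial_{t_s}$; combined with the density expansion this yields
\begin{align*}
\partial_{\b t_s}\b\rho_{\ss A}^{(j)}=h^s\partial_{t_s}\bigl(\rho_{\ss A}^{(j)}h^{j+1}+O(h^{j+2})\bigr)
=\partial_{t_s}\rho_{\ss A}^{(j)}\,h^{j+s+1}+O(h^{j+s+2}).
\end{align*}
Next, since $f(n+1)=f(x+h)$ for a lattice function in the continuum limit, one has $E-1=h\partial_x+O(h^2)$, so applying it to the flux expansion gives
\begin{align*}
(E-1)\b J_{\ss A,s}^{(j)}=\partial_x J_{\ss A,s}^{(j)}\,h^{j+s+1}+O(h^{j+s+2}).
\end{align*}
Subtracting the two displays produces precisely \eqref{cl-cls-cls}.

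As a consequence, because the discrete conservation law \eqref{dA-cls-cl} holds identically its left-hand side vanishes order by order, so the leading coefficient above is forced to be zero, i.e. $\partial_{t_s}\rho_{\ss A}^{(j)}-\partial_x J_{\ss A,s}^{(j)}=0$; this is exactly the continuous conservation law \eqref{A-cls}, confirming that the limit is nontrivial. I do not expect a real obstacle here, since the substantive content -- the leading coefficients of $\b\rho_{\ss A}^{(j)}$ and $\b J_{\ss A,s}^{(j)}$ -- was already secured in the preceding paragraphs. The one point requiring care is the \textbf{bookkeeping of $h$-powers}: one must verify that $E-1$ raises the $h$-order by \emph{exactly} one, so that the $O(h^{j+s+1})$ remainder of $\b J_{\ss A,s}^{(j)}$ contributes only at order $h^{j+s+2}$, and that the factor $h^s$ coming from $\partial_{\b t_s}=h^s\partial_{t_s}$ is applied consistently; granting this, the proof is merely the assembly and combination of the established estimates.
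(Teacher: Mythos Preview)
Your proposal is correct and follows essentially the same approach as the paper: the proposition is stated there as the summary of the preceding discussion, with \eqref{cl-rhoj-cl} and \eqref{cl-Jj-cl} supplying the two expansions in \eqref{cl-cls-cqf}, and the final relation \eqref{cl-cls-cls} left implicit. Your explicit verification of that last step via $\partial_{\b t_s}=h^s\partial_{t_s}$ and $E-1=h\partial_x+O(h^2)$ is exactly the intended bookkeeping.
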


At the end of this part, let us also take a look at the continuum limit of the conserved densities $\b\varrho_{\ss A}^{(j)}=h_j(\bfx)$ with
$x_i\doteq  Q_n \b\omega_{\ss A}^{(i)}$, which are derived in Sec.\ref{sec:dA-cls-tri} (also see \cite{ZC02a}).
Obviously,
\[
\b\omega_{\ss A}^{(1)}=R_{n-1}=h r+O(h^{2}),\quad  \b\omega_{\ss A}^{(2)}=R_{n-2}=h r+O(h^{2}),
\]
which means $\deg \b\omega_{\ss A}^{(1)}=\deg \b\omega_{\ss A}^{(2)}=1$.
Then, from the recursive structure \eqref{dA-ome-rec} we immediately have
\[\deg \b\omega_{\ss A}^{(j)} \equiv 1,\quad  j=1,2,\cdots,\]
and
\[
\b\omega_{\ss A}^{(j)}= h r+O(h^{2}),\quad  j=1,2,\cdots.
\]
Thus we get
\[x_j=qr h^2+O(h^3),\quad j=1,2,\cdots,\]
and then from the definition of $h_j(\bfx)$ we obtain 
\[h_j(\bfx)\equiv qr h^2 +O(h^{3}),\quad  j=1,2,\cdots.\]

\begin{prop}\label{prop:cl-tri}
All the  conserved densities $\b\varrho_{\ss A}^{(j)}=h_j(\bfx)$ with
$x_i\doteq  Q_n \b\omega_{\ss A}^{(i)}$, which are derived in Sec.\ref{sec:dA-cls-tri} (also see \cite{ZC02a})
for the AL hierarchy as well as for the sdAKNS hierarchy,
are trivial in the continuum limit given in Sec.\ref{sec:cl-plan},
and
\[ \b\varrho_{\ss A}^{(j)}\equiv  qr h^2 +O(h^{3}),\quad  j=1,2,\cdots.\]
In other words, in our continuum limit, all of the conserved densities $\b\varrho_{\ss A}^{(j)}$ go to
$\rho_{\ss A}^{(1)}=qr$, which is the first conserved density of the AKNS hierarchy.
\end{prop}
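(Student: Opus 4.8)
The plan is to show that every $\b\omega_{\ss A}^{(j)}$ shares the single leading term $hr$ under the continuum limit, so that the variables $x_i=Q_n\b\omega_{\ss A}^{(i)}$ feeding $h_j(\bfx)$ all collapse to the same degree-two quantity $qr\,h^2$; the explicit formula \eqref{htj} then forces $h_j(\bfx)$ itself to reduce to $qr\,h^2$ at leading order, independently of $j$. Concretely, I would first record the base cases directly from \eqref{dA-ome-rec-a}: under the substitutions $Q_n\mapsto hq_n$, $R_n\mapsto hr_n$ of Sec.\ref{sec:cl-plan}, one has $\b\omega_{\ss A}^{(1)}=R_{n-1}=hr+O(h^2)$ and $\b\omega_{\ss A}^{(2)}=R_{n-2}=hr+O(h^2)$, so $\deg\b\omega_{\ss A}^{(1)}=\deg\b\omega_{\ss A}^{(2)}=1$ with common leading term $hr$.

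Next I would run the induction on the recursion \eqref{dA-ome-rec-b}. Assuming $\b\omega_{\ss A}^{(i)}=hr+O(h^2)$ for all $i\le j$, I examine the two contributions to $\b\omega_{\ss A}^{(j+1)}$ separately. The linear term $E^{-1}\b\omega_{\ss A}^{(j)}$ is a \emph{plain} backward shift rather than a difference, so it merely transports the leading term to a shifted argument, $E^{-1}(hr)=hr(x-h)=hr+O(h^2)$, leaving the degree-one leading term intact. The nonlinear term $-Q_{n-1}\sum_{k=1}^{j-1}\b\omega_{\ss A}^{(k)}E^{-1}\b\omega_{\ss A}^{(j-k)}$ is a product of three factors each of order $h$ (one $Q_{n-1}=O(h)$ and two $\b\omega$'s of order $h$), hence $O(h^3)$ and negligible against $hr=O(h)$. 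Therefore $\b\omega_{\ss A}^{(j+1)}=hr+O(h^2)$, closing the induction and giving $\deg\b\omega_{\ss A}^{(j)}\equiv 1$ with the same leading coefficient $r$ for every $j$.

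With this in hand, $x_j=Q_n\b\omega_{\ss A}^{(j)}=(hq)(hr)+O(h^3)=qr\,h^2+O(h^3)$ for all $j$, so $\deg x_i\equiv 2$. I would then read off the leading behaviour of $h_j(\bfx)$ from the closed form \eqref{htj}: a multi-index $\balpha$ with $\|\balpha\|=\sum_i i\alpha_i=j$ contributes a monomial $\bfx^{\balpha}$ of degree $2|\balpha|=2\sum_i\alpha_i$, which is minimized by the single index $\alpha_j=1$ (all others zero), giving $|\balpha|=1$, degree $2$, and coefficient $(-1)^0\,0!\,x_j/1=x_j$; every other admissible $\balpha$ has $|\balpha|\ge 2$ and hence degree $\ge 4$. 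Thus $h_j(\bfx)=x_j+O(h^3)=qr\,h^2+O(h^3)$ for each $j$, which is the asserted formula $\b\varrho_{\ss A}^{(j)}\equiv qr\,h^2+O(h^3)$, and dividing by $h^2$ shows every $\b\varrho_{\ss A}^{(j)}$ limits to the single AKNS density $\rho_{\ss A}^{(1)}=qr$.

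The main obstacle---really the conceptual heart---is establishing that $\deg\b\omega_{\ss A}^{(j)}$ does \emph{not} grow with $j$. This is precisely what distinguishes the present trivial densities from the meaningful ones of Sec.\ref{sec:dA-cls-mean}: the recursion \eqref{dA-ome-rec-b} propagates the leading term through the bare shift $E^{-1}$, whereas the $\b\Omega_{\ss A}$ recursion \eqref{dA-Ome-rec-b} uses the difference operator $E^{-1}-1$, which differentiates the leading term and raises its degree by one at each step (yielding $\deg\b\Omega_{\ss A}^{(j)}=j$). Once the non-growth of $\deg\b\omega_{\ss A}^{(j)}$ is secured, the remaining degree bookkeeping in \eqref{htj} is routine.
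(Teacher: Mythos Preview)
Your proposal is correct and follows essentially the same route as the paper: base cases from \eqref{dA-ome-rec-a}, induction on the recursion \eqref{dA-ome-rec-b} to obtain $\b\omega_{\ss A}^{(j)}=hr+O(h^2)$ for all $j$, hence $x_j=qr\,h^2+O(h^3)$, and finally the degree count in \eqref{htj} to isolate $x_j$ as the unique leading contribution. Your write-up is more explicit than the paper's (you spell out why the bare shift $E^{-1}$ preserves the leading term and why the nonlinear sum is $O(h^3)$, and you carry out the multi-index minimization in \eqref{htj} rather than just citing ``the definition of $h_j$''), and your closing remark contrasting $E^{-1}$ with $E^{-1}-1$ is a helpful gloss, but the underlying argument is identical.
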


\subsection{Reductions}\label{sec:cl-reduc}

\subsubsection{The sdKdV hierarchy}

For the sdKdV hierarchy,
if we still use the continuum limit scheme given in Sec.\ref{sec:cl-plan}, we need to
rewrite the reduction of the sdAKNS hierarchy under the constraint  $(Q_n,R_n)=(h q_n,-h)$.
However, this is just equivalent to taking $Q_n=h^2 q_n$  if we still use the constraint  $(Q_n,R_n)=(Q_n,-1)$.
Thus, for the continuum limit of the sdKdV hierarchy, we still follow the scheme proposed in Sec.\ref{sec:cl-plan}
except that the first item is replaced by
\begin{itemize}
\item{ Replacing $Q_n$ with $h^2 q_n$.}
\end{itemize}
Then we find that in the continuum limit
\begin{align}\label{cl-K-ro}
\b L_{\ss K}=L_{\ss K}h^2+O(h^3),
\end{align}
and
\[
Q_{n,\b t_{1}}-\b K_{\ss K,1}=(q_{t_{1}}-q_x)h^{3}+O(h^{4}),
\]
which implies 
\begin{align}\label{cl-K-hie}
Q_{n,\b t_{2s+1}}-\b K_{\ss K,2s+1}=(q_{t_{2s+1}}-K_{\ss K,2s+1})h^{2s+3}+O(h^{2s+4}),\quad s=0,1,\cdots.
\end{align}

We note that in our continuum limit it is the second equation \eqref{dK-eq} in the sdKdV hierarchy \eqref{dK-hie}
that goes to the continuous KdV equation, namely
\begin{align}\label{cl-K-eq}
&Q_{n,t_3}-\frac{1}{2}\mu_n(E-E^{-1})\big[Q_{n+1}-2Q_n+Q_{n-1}+Q_n(Q_{n+1}+Q_n+Q_{n-1})\big]\nonumber\\
={}&(q_{t_3}-q_{xxx}-6qq_x)h^5+O(h^6).
\end{align}
 This is also found in \cite{HHLRW00,Sur03}.
In this sense, equation \eqref{dK-eq} can be referred to as the sdKdV equation.
Meanwhile, for the Lax pair of \eqref{dK-eq}, we find
\bse\label{cl-K-Lax}
\begin{align}
&E^2\b\phi-\Big(\lambda+\frac{1}{\lambda}\Big)E\b\phi-(1+Q_n)\b\phi=\big[\phi_{xx}-(\eta^2-q)\phi\big]h^2+O(h^3),\label{cl-K-Lax-a}\\
&\b\phi_{\b t_3}-\alpha_3 \b\phi-\beta_3 E\b\phi=\big[\phi_{t_3}+q_x\phi-(4\eta^2+2q)\phi_x\big]h^3+O(h^4),\label{cl-K-Lax-b}
\end{align}
\ese
i.e. it goes to the Lax pair of the KdV equation.

\subsubsection{The sdmKdV hierarchy}

For the sdmKdV hierarchy,
under the continuum limit scheme described  in Sec.\ref{sec:cl-plan},
we find
\begin{align}\label{cl-M-ro}
\b L_{\ss M}^\pm=L_{\ss M}^\pm h^2+O(h^3)
\end{align}
and
\[
Q_{n,\b t_{1}}-\b K_{\ss M,1}^{\pm}=(q_{t_{1}}-q_x)h^{2}+O(h^{3}),
\]
which  leads to
\begin{align}\label{cl-M-hie}
Q_{n,\b t_{2s+1}}-\b K_{\ss M,2s+1}^{\pm}=(q_{t_{2s+1}}-K_{\ss M,2s+1}^{\pm})h^{2s+2}+O(h^{2s+3}),\quad s=0,1,\cdots.
\end{align}

We call equation \eqref{dM-eq} the sdmKdV equation because in our continuum limit,
it goes to the mKdV equation (see also \cite{AL76,Sur03}), i.e.
\begin{align}\label{cl-M-eq}
&Q_{n,\b t_3}-\frac{1}{2}\mu_n^\pm(E-E^{-1})\big[Q_{n+1}-2Q_n+Q_{n-1}\pm Q_n^2(Q_{n+1}+Q_{n-1})\big]\nonumber\\
={}&\big[q_{t_3}-(q_{xxx}\pm 6q^2q_x)\big]h^4+O(h^5).
\end{align}
A higher order equation $\b Q_{n,\b t_5}=\b K_{\ss M,5}^-$ was investigated in a recent paper \cite{ZZH13}.

\subsubsection{The sdNLS hierarchy}

For the sdNLS hierarchy, we find that in the continuum limit,
\begin{align}\label{cl-N-ro}
\b{\c L}_{\ss N}^\pm=(L_{\ss N}^\pm)^2 h^2+O(h^3),
\end{align}
and
\begin{align*}
& \left(
  \begin{array}{c}
    Q_n \\
    \mp Q_n^* \\
  \end{array}
\right)_{\b t_0}
-\b K_{\ss N,0}^{\pm}=\Bigg[
\left(
  \begin{array}{c}
    q \\
    \pm q^* \\
  \end{array}
\right)_{t_0}
-i\left(
  \begin{array}{c}
    q \\
    \mp q^* \\
  \end{array}
\right)
\Bigg]h^{}+O(h^{2})\\
& \left(
  \begin{array}{c}
    Q_n \\
    \mp Q_n^* \\
  \end{array}
\right)_{\b t_1}
-\b K_{\ss N,1}^{\pm}=\Bigg[
\left(
  \begin{array}{c}
    q \\
    \mp q^* \\
  \end{array}
\right)_{t_1}
-\left(
  \begin{array}{c}
    q \\
    \mp q^* \\
  \end{array}
\right)_x
\Bigg]h^{2}+O(h^{3}).
\end{align*}
Then we have
\begin{align}\label{cl-N-hie}
\left(
  \begin{array}{c}
    Q_n \\
    \mp Q_n^* \\
  \end{array}
\right)_{\b t_s}
-\b K_{\ss N,s}^{\pm}=\Bigg[
\left(
  \begin{array}{c}
    q \\
    \mp q^* \\
  \end{array}
\right)_{  t_s}
-K_{\ss N,s}^{\pm}
\Bigg]h^{s+1}+O(h^{s+2}),\quad s=0,1,\cdots.
\end{align}
For example, the continuum limit of the sdNLS equation \eqref{dN-eq} is
\begin{align}
&Q_{n,\b t_2}-i\big[Q_{n+1}-2Q_n+Q_{n-1}\pm|Q_n|^2(Q_{n+1}+Q_{n-1})\big]\nonumber\\
={}&\big[q_{t_2}-i(q_{xx}\pm 2|q|^2q)\big]h^3+O(h^4).
\end{align}

\section{Conclusions}\label{sec:concl}

We have shown that the sdAKNS hierarchy can be derived from the AL hierarchy through certain combinations.
As reductions we obtained the sdKdV, sdmKdV and sdNLS hierarchies.
The Lax pairs and conservation laws of the sdAKNS hierarchy were re-derived
so that they cope with their continuous counterparts in  continuum limit.
We designed a uniform continuum limit scheme under which the sdAKNS hierarchy,
their Lax pairs, infinitely many conservation laws and reductions of the hierarchy
go to their continuous counterparts of the AKNS system.
In this continuum limit scheme the spatial and temporal independence is kept.
The same scheme has been also used to explain the structure deformation of symmetry algebra \cite{ZC10b}.
Hamiltonian structures of the sdAKNS hierarchy and their continuum limit will be investigated later.

Finally, a further comment is given for conservation laws.
The known infinitely many conservation laws derived in \cite{ZC02a} for the AL hierarchy
(as well as for the sdAKNS hierarchy) are trivial in light of the continuum limit.
We have shown that all those conserved densities $\b\varrho_{\ss{A}}^{(j)}$ go to the same $\rho_{\ss{A}}^{(1)}=qr$
which is the first conserved density of the AKNS hierarchy.
We have given  \textit{new} forms of the conservation laws for the sdAKNS hierarchy.
These new conservations laws cope with their continuous counterparts and
are related to those trivial ones through explicit combinatorial relation.
By the same constraints we used in reductions, we can obtain the infinitely many conservation laws of those reduced hierarchies.

\section*{Acknowledgements}

The authors thank Prof. Deng-yuan Chen for enthusiastic discussions.
DJZ is grateful to Prof. Zhijun Qiao for the hospitality when he visited
the University of Texas-Pan American.
This project is  partially supported by National Natural Science Foundation of China (Grant Numbers 11071157 and 11171295),
the SRF of the DPHE of China (No. 20113108110002)
and the Project of ``First-class Discipline of Universities in Shanghai''.

\appendix

\section{$\b N_s$ in the AL Lax pair \eqref{AL-Lax}}\label{app-A}

{We list out several matrices $\b N_s$ in the AL Lax pair \eqref{AL-Lax}. $\b N_0$ is for the equation $\b u_{\b t_0}=\b K_0$.}
\bse\label{AL-Ns}
\begin{align}
&\b N_0=
\left(
  \begin{array}{cc}
    \frac{1}{2} & 0 \\
    0 & -\frac{1}{2} \\
  \end{array}
\right),\label{AL-N(0)}\\
&\b N_1=
\left(
  \begin{array}{cc}
    \frac{1}{2}\lambda^2-Q_nR_{n-1} & Q_n\lambda \\
    R_{n-1}\lambda & -\frac{1}{2}\lambda^2 \\
  \end{array}
\right),\quad \b N_{-1}=
\left(
  \begin{array}{cc}
    -\frac{1}{2}\lambda^{-2} & -Q_{n-1}\lambda^{-1} \\
    -R_n\lambda^{-1} & -\frac{1}{2}\lambda^{-2}+Q_{n-1}R_n \\
  \end{array}
\right),\label{AL-N(-1)}
\end{align}
\begin{align}
&\b N_2=
\left(
  \begin{array}{cc}
    \b A_2 & \b B_2 \\
    \b C_2 & \b D_2 \\
  \end{array}
\right),\quad \b N_2=
\left(
  \begin{array}{cc}
    \b A_{-2} & \b B_{-2} \\
    \b C_{-2} & \b D_{-2} \\
  \end{array}
\right),\label{AL-N(-2)}
\end{align}
\ese
where
\begin{align*}
&\b A_2=\frac{1}{2}\lambda^4-Q_nR_{n-1}\lambda^2-\mu_{n-1}Q_nR_{n-2}-\mu_nQ_{n+1}R_{n-1}+Q_{n-1}+Q_n^2R_{n-1}^2,\\
&\b B_2=Q_n\lambda^3+(\mu_nQ_{n+1}-Q_n^2R_{n-1})\lambda,\quad
\b C_2=R_{n-1}\lambda^3+(\mu_{n-1}R_{n-2}-Q_nR_{n-1}^2)\lambda,\\
&\b D_2=-\frac{1}{2}\lambda^4+Q_nR_{n-1}\lambda^2,
\end{align*}
and
\begin{align*}
&\b A_{-2}=\frac{1}{2}\lambda^{-4}-Q_{n-1}R_n\lambda^{-2},\\
&\b B_{-2}=-Q_{n-1}\lambda^{-3}+(-\mu_{n-1}Q_{n-2}+Q_{n-1}^2R_n)\lambda^{-1},\\
&\b C_{-2}=-R_{n-1}\lambda^{-3}+(-\mu_nR_{n+1}+Q_{n-1}R_n^2)\lambda^{-1},\\
&\b D_{-2}=-\frac{1}{2}\lambda^{-4}+Q_{n-1}R_n\lambda^{-2}+\mu_{n-1}Q_{n-2}R_n+\mu_{n}Q_{n-1}R_{n+1}-Q_{n-1}^2R_n^2.
\end{align*}

\section{The first four $\b N_{\ss A,s}$ }\label{app-B}

Here we give the first four of $\b N_{\ss A,s}$:
\bse\label{dA-Ns}
\begin{align}
&\b N_{\ss A,0}=
\left(
  \begin{array}{cc}
    \frac{1}{2} & 0 \\
    0 & -\frac{1}{2} \\
  \end{array}
\right),\label{dA-N0}\\
&\b N_{\ss A,1}=\frac{1}{2}
\left(
  \begin{array}{cc}
    \frac{1}{2}\lambda^2-Q_nR_{n-1}-\frac{1}{2}\lambda^{-2} & Q_n\lambda+Q_{n-1}\lambda^{-1} \\
    R_{n-1}\lambda+R_n\lambda^{-1} & -\frac{1}{2}\lambda^2-Q_{n-1}R_n+\frac{1}{2}\lambda^{-2} \\
  \end{array}
\right),\label{dA-N1}\\
&\b N_{\ss A,2}=
\left(
  \begin{array}{cc}
    \frac{1}{2}\lambda^2-(1+Q_nR_{n-1})+\frac{1}{2}\lambda^{-2} & Q_n\lambda-Q_{n-1}\lambda^{-1} \\
    R_{n-1}\lambda-R_n\lambda^{-1} & -\frac{1}{2}\lambda^2+(1+Q_{n-1}R_n)-\frac{1}{2}\lambda^{-2} \\
  \end{array}
\right),\label{dA-N2}\\
&\b N_{\ss A,3}=
\left(
  \begin{array}{cc}
    \b A_{\ss A,3} & \b B_{\ss A,3} \\
    \b C_{\ss A,3} & \b D_{\ss A,3} \\
  \end{array}
\right),\label{dA-N3}
\end{align}
\ese
where
\begin{align*}
\b A_{\ss A,3}={}&\frac{1}{4}\lambda^4-\frac{1}{2}(1+Q_nR_{n-1})\lambda^2-\frac{1}{2}(Q_nR_{n-2}+Q_{n+1}R_{n-1}-2Q_nR_{n-1}\\
&-Q_{n-1}Q_nR_{n-2}R_{n-1}-Q_nQ_{n+1}R_{n-1}R_n-Q_n^2R_{n-1}^2)+\frac{1}{2}(1+Q_{n-1}R_n)\lambda^{-2}-\frac{1}{4}\lambda^{-4},\\
\b B_{\ss A,3}={}&\frac{1}{2}Q_n\lambda^3+\frac{1}{2}(Q_{n+1}-2Q_n-Q_nQ_{n+1}R_n-Q_n^2R_{n-1})\lambda\\
&+\frac{1}{2}(Q_{n-2}-2Q_{n-1}-Q_{n-2}Q_{n-1}R_{n-1}-Q_{n-1}^2R_n)\lambda^{-1}+\frac{1}{2}Q_{n-1}\lambda^{-3},\\
\b C_{\ss A,3}={}&\frac{1}{2}R_{n-1}\lambda^3+\frac{1}{2}(R_{n-2}-2R_{n-1}-Q_{n-1}R_{n-2}R_{n-1}-Q_nR_{n-1}^2)\lambda\\
&+\frac{1}{2}(R_{n+1}-2R_n-Q_nR_nR_{n+1}-Q_{n-1}R_n^2)\lambda^{-1}+\frac{1}{2}R_n\lambda^{-3},\\
\b D_{\ss A,3}={}&-\frac{1}{4}\lambda^4+\frac{1}{2}(1+Q_nR_{n-1})\lambda^2-\frac{1}{2}(Q_{n-2}R_n+Q_{n-1}R_{n+1}-2Q_{n-1}R_n\\
&-Q_{n-2}Q_{n-1}R_{n-1}R_n-Q_{n-1}Q_nR_nR_{n+1}-Q_{n-1}^2R_n^2)-\frac{1}{2}(1+Q_{n-1}R_n)\lambda^{-2}+\frac{1}{4}\lambda^{-4}.
\end{align*}

\end{document}